\newtheorem{proposition}{Proposition}
\newtheorem{lemma}{Lemma}
\newtheorem{corollary}{Corollary}
\newtheorem{definition}{Definition}
\newtheorem{example}{Example}
\newcommand{\continuation}{??}
\newenvironment{continueexample}[1]
 {\renewcommand{\continuation}{\ref{#1}}\excont[continued]}
 {\endexcont}
\newcommand{\btheta}{\boldsymbol{\theta}}
\title{Mechanisms without transfers for fully biased agents\thanks{We thank Sarah Auster, Felix Bierbrauer, Daniel Bird, Martin Cripps, Francesc Dilme, Piotr Dworczak, Tangren Feng, Duarte Gonçalves, Hans Peter Grüner, Philippe Jehiel, Jan Knoepfle, Nenad Kos, Daniel Kr\"ahmer, Patrick Lahr, Stephan Lauermann, Jiangtao Li, Moritz Meyer-ter-Vehn, Konrad Mierendorff, Benny Moldovanu, Georg N\"oldeke, Nikita Roketskiy, Ludvig Sinander, Dezsö Szalay, Jonas von Wangenheim and seminar participants in Berlin, Bonn, Cologne (YEP), LSE, Mannheim, Oxford, SMU (Singapore), and UCL.

}}
\author{
Deniz Kattwinkel\thanks{Department of Economics, UCL. Email: \textit{\href{mailto:d.kattwinkel@ucl.ac.uk}{d.kattwinkel@ucl.ac.uk}.}}
\and
Axel Niemeyer\thanks{Department of Economics, University of Bonn. Email: \textit{\href{mailto:axel.niemeyer@uni-bonn.de}{axel.niemeyer@uni-bonn.de}}. Niemeyer was funded by the Deutsche Forschungsgemeinschaft (DFG, German Research Foundation) under Germany's Excellence Strategy -- EXC 2126/1 -- 390838866.}
\and
Justus Preusser\thanks{Department of Economics, University of Bonn. Email: \textit{\href{mailto:justus.preusser@uni-bonn.de}{justus.preusser@uni-bonn.de}}. Preusser gratefully acknowledges funding by the Institute on Behavior and Inequality (briq). Preusser thanks Yale University, where parts of this paper were written, for their hospitality.}
\and
Alexander Winter\thanks{Department of Economics, University of Bonn. Email: \textit{\href{mailto:awinter@uni-bonn.de}{awinter@uni-bonn.de}.} Winter was funded by the Deutsche Forschungsgemeinschaft (DFG, German Research Foundation) under Germany's Excellence Strategy -- EXC 2126/1 -- 390838866.
}

}
\date{\today}
\begin{document}

\onehalfspacing

\maketitle

\begin{abstract}
\noindent
A principal must decide between two options. Which one she prefers depends on the private information of two agents. One agent always prefers the first option; the other always prefers the second. Transfers are infeasible. 
  One application of this setting is the efficient division of a fixed budget between two competing departments. 
  We first characterize all implementable mechanisms under arbitrary correlation. Second, we study when there exists a mechanism that yields the principal a higher payoff than she could receive by choosing the ex-ante optimal decision without consulting the agents. In the budget example, such a profitable mechanism exists if and only if the information of one department is also relevant for the expected returns of the other department. 
  We generalize this insight to derive necessary and sufficient conditions for the existence of a profitable mechanism in the $n$-agent allocation problem with independent types.

\end{abstract}

\section{Introduction}
A principal has to decide between two options. Which one she prefers depends on the private information of two agents. 
One agent always prefers the first option; the other always prefers the second. Transfers are infeasible. The principal designs and commits to a mechanism: a mapping from reported information profiles to a -- potentially randomized -- decision. 
One prominent example of a setting without transfers is the allocation of a fixed amount of money:
\begin{example}[Budget allocation] \label{ex:budget}
Upper management has endowed a division manager with a fixed budget. She can divide these funds between her two departments $L,R$. 
Her objective is to maximize expected returns. Department heads $i=\ell, r$ hold private information $\theta_i$ about the future marginal returns $y_L, y_R$ and want to maximize their department's budget. Formally, $(\theta_\ell, \theta_r, y_L, y_R)$ follows some joint distribution and conditional on the private information the manager's marginal return from allocating $1\$$ to $L$ is $v(\theta_\ell, \theta_r)=E[y_L-y_R|\theta_\ell,\theta_r]$.
\end{example}

To the best of our knowledge this is the first paper that characterizes all implementable mechanisms without transfers under arbitrary correlation. We find a connection between our mechanism design setting and a zero-sum game. Incentive compatibility of a mechanism given a type distribution corresponds to this distribution being a correlated equilibrium in the game induced by the mechanism. 

\citeauthor{Cremer.1985}'s (\citeyear{Cremer.1985, Cremer.1988}) results for the corresponding setting with transfers suggest that the principal should be able to exploit correlation to induce truthful reporting. We define a preorder on type distributions and find that correlation has the opposite effect in our setting: it restricts the set of implementable mechanisms.
Under their full-rank condition the set of implementable mechanisms collapses completely and the principal can never do better than choosing her ex-ante preferred option. We give necessary and sufficient conditions for the existence of a ``profitable'' mechanism that allows the principal to do better. When she is ex-ante indifferent the existence of a profitable mechanism is equivalent to a non-additive payoff structure. 
When she is not ex-ante indifferent a key insight is that choosing a mechanism corresponds to introducing endogenous correlation. 
Existence of a profitable mechanism depends on the value of a related optimal transport problem in which the principal chooses this endogenous correlation structure. Incentive constraints translate into an equal marginals condition and  an orthogonality constraint between the exogenously given type distribution and the endogenously chosen one.

One application of our results is the problem of allocating a single nondisposable good between two agents. In section 6, we extend our setting and study the problem of allocating a (potentially disposable) good among $n$ agents under independence. 
When the good has to be allocated, we find that a profitable mechanism exists if and only if a generalized version of the additivity condition is violated.
Under free disposal, a profitable mechanism exists if and only if there is an agent such that the principal's value from allocating to that agent depends on the types of other agents.

More broadly, our results convey that there is large class of settings without transfers where the principal can profit from designing a mechanism that elicits the agent's information despite their opposed interests. This scope for communication does not rely on any correlation of the agents' information but instead on types interdependence in the principal's preferences.

\section{Model} \label{sec:model}
There is a principal, two agents $i = \ell, r$ and a decision: $L$ or $R$. Agent $\ell$ always prefers $L$; agent $r$ always prefers $R$. Agents enjoy utility 1 if their favored decision is taken and 0 otherwise.\footnote{All of our results would apply unchanged if agents receive utility $\bar u_i(\theta_i)$ from their preferred decision and utility $\underline{u}_i(\theta_i)$ from their less preferred decision, where $\bar u_i > \underline{u}_i$.} Each agent has a private \textit{type} $\theta_i \in \Theta_i$ ( $|\Theta_i| < \infty$) and the type profile $\theta=(\theta_{\ell}, \theta_r)$ is drawn from a commonly known distribution $\pi(\theta_{\ell}, \theta_r)$ with positive\footnote{This is without loss. Note that we do not assume full support.} marginals $\pi_{\ell},\pi_r$. Let
$\Pi$ be the set of joint type distributions with positive marginals and let $\Pi(\pi_{\ell}, \pi_r)$ be the set of joint type distributions with marginals $\pi_i$. The independent type distribution with marginals $\pi_i$ is denoted by $\pi_{\ell}\pi_r$.

The principal designs and commits to a mechanism. By the revelation principle she can restrict attention to direct, incentive-compatible\footnote{More precisely: Bayesian IC. In this setting, the only ex-post IC mechanisms are constant mechanisms.} mechanisms $x\colon\Theta=\Theta_{\ell}\times\Theta_r \to [0,1]$, where $x(\theta_{\ell}, \theta_r)$ denotes the probability that $L$ is chosen if agent $\ell$ reports $\theta_{\ell}$ and agent $r$ reports $\theta_r$. From now on we refer to direct mechanisms simply as mechanisms. 

If the realized type profile is $\theta$ then the principal receives a payoff of $v_L(\theta)$ from $L$ and of $v_R(\theta)$ from $R$, her (without loss) ex-ante preferred option.
We normalize $v_R=0$ so that $E_{\pi}[v_L(\btheta)]\le 0$.\footnote{Bold symbols denote random variables.} From now on we denote $v_L=v$.  

The principal's problem then reads:
\begin{subequations}
\begin{alignat*}{3}
&\!\max_{0\le x(\theta)\le 1} &\qquad & E_{\pi}[v(\btheta)x(\btheta)]\notag{}&&\\
\tag{$IC_{\ell}$}\label{eq:IC_ell}
&\text{s.t.}&       & E_{\pi}[x(\theta_{\ell},\btheta_r)|\theta_{\ell}] \ge E_{\pi}[x(\theta_{\ell}', \btheta_r)|\theta_{\ell}] &\qquad& \forall \theta_{\ell}, \theta_{\ell}'\\
\tag{$IC_r$}\label{eq:IC_r}
&           &       & E_{\pi}[x(\btheta_{\ell}, \theta_r)|\theta_r] \le E_{\pi}[x(\btheta_{\ell}, \theta_r')|\theta_{\ell}]&\qquad& \forall \theta_r, \theta_r'
\end{alignat*}
\end{subequations}

Given $\pi\in \Pi$, let the set of IC mechanisms be $\mathcal{X}(\pi)$.
A mechanism is said to be \textit{profitable} if it is IC and yields the principal a strictly greater payoff than choosing her ex-ante preferred option $R$ without consulting the agents. Given our normalization of the principal's payoff, an IC mechanism $x$ is profitable if and only if
$E_{\pi}[v(\btheta)x(\btheta)] >0$.

\section{Implementation} \label{sec:implementation}
In this section we characterize the set of IC mechanisms given a type-distribution. The proof is based on the observation that incentive-compatibility can be phrased in terms of the correlated equilibria of an auxiliary two-player zero-sum game. 

Let $\pi\in \Pi$ and let $x$ be an IC mechanism.  The IC conditions read
\setcounter{equation}{0}
\begin{align}
    \tag{$IC_{\ell}'$}\label{eq:IC_ellsum}
    \sum_{\theta_r} \pi(\theta_r|\theta_{\ell})x(\theta_{\ell}, \theta_r)
    &\ge 
    \sum_{\theta_r} \pi(\theta_r|\theta_{\ell})x(\theta_{\ell}', \theta_r)
    \quad \forall\, \theta_{\ell}, \theta_{\ell}'\\
    \tag{$IC_r'$}\label{eq:IC_rsum}
    \sum_{\theta_{\ell}} \pi(\theta_{\ell}|\theta_r)x(\theta_{\ell}, \theta_r)
    &\le 
    \sum_{\theta_{\ell}} \pi(\theta_{\ell}|\theta_r)x(\theta_{\ell}, \theta_r')
    \quad \forall\, \theta_r, \theta_r'.
\end{align}
Consider now the auxiliary two-player zero-sum game $G$ in which the Maximizer chooses $\theta_{\ell}$, the Minimizer chooses $\theta_r$ and the objective (i.e. the Maximizer's payoff if $\theta_{\ell}$ and $\theta_r$ is chosen) is $x(\theta_{\ell}, \theta_r)$. In this game we can interpret $\pi$ as a \textit{correlated strategy} under which the Maximizer's payoff is $\sum_{\theta_{\ell}}\sum_{\theta_r}\pi(\theta_{\ell}, \theta_r)x(\theta_{\ell}, \theta_r)$. With this interpretation the IC conditions become \textit{obedience conditions} and $\pi$ becomes a \textit{correlated equilibrium} of $G$:

\begin{lemma} \label{prop:auxZeroSum}
A mechanism $x$ is IC under some type distribution $\pi \in \Pi$ if and only if $\pi$ is a correlated equilibrium of the auxiliary two-player zero-sum game 
in which the Maximizer chooses $\theta_{\ell}\in\Theta_{\ell}$, the Minimizer chooses $\theta_r\in\Theta_r$ and the Maximizer's payoff is $x(\theta_{\ell}, \theta_r)$.
\end{lemma}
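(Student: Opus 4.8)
The plan is to unwind the definition of a correlated equilibrium of the auxiliary game $G$ and to check that the resulting obedience conditions are, term by term, the incentive constraints of the principal's problem. Since the argument is a translation between two vocabularies rather than a computation, I expect it to be short; the only thing requiring care is matching the ``Maximizer'' and ``Minimizer'' roles to agents $\ell$ and $r$ with the correct inequality directions.

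First I would fix $\pi\in\Pi$ and $x\colon\Theta\to[0,1]$ and recall what it means for $\pi$, viewed as a correlated strategy of $G$, to be a correlated equilibrium: a mediator draws $(\theta_\ell,\theta_r)\sim\pi$ and privately recommends $\theta_\ell$ to the Maximizer and $\theta_r$ to the Minimizer, and no player can gain by unilaterally deviating from the recommendation conditional on having received it. Because every $\pi\in\Pi$ has strictly positive marginals, the posteriors $\pi(\cdot\mid\theta_\ell)$ and $\pi(\cdot\mid\theta_r)$ are well-defined for every type, so every recommendation arises with positive probability and all obedience constraints are meaningful. Writing out obedience for the Maximizer — who is paid $x$ and hence wants it large — yields, for all $\theta_\ell,\theta_\ell'$, $\sum_{\theta_r}\pi(\theta_r\mid\theta_\ell)x(\theta_\ell,\theta_r)\ge\sum_{\theta_r}\pi(\theta_r\mid\theta_\ell)x(\theta_\ell',\theta_r)$; writing out obedience for the Minimizer — who is paid $-x$, equivalently $1-x$, and hence wants $x$ small — yields, for all $\theta_r,\theta_r'$, $\sum_{\theta_\ell}\pi(\theta_\ell\mid\theta_r)x(\theta_\ell,\theta_r)\le\sum_{\theta_\ell}\pi(\theta_\ell\mid\theta_r)x(\theta_\ell,\theta_r')$. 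These are precisely \eqref{eq:IC_ellsum} and \eqref{eq:IC_rsum}.

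Second, I would note that \eqref{eq:IC_ellsum}--\eqref{eq:IC_rsum} are just \eqref{eq:IC_ell}--\eqref{eq:IC_r} with the conditional expectations written out explicitly, using $E_\pi[x(\theta_\ell,\btheta_r)\mid\theta_\ell]=\sum_{\theta_r}\pi(\theta_r\mid\theta_\ell)x(\theta_\ell,\theta_r)$ and the analogous identities for the off-path term and for agent $r$; again the positive-marginals assumption on $\pi$ is what makes these conditional expectations well-defined for every type. Combining the two steps, $x\in\mathcal{X}(\pi)$ if and only if $\pi$ is a correlated equilibrium of $G$, which is the claim.

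The main — and really only — obstacle is bookkeeping: verifying that the agents' opposed preferences over $L$ and $R$ translate into opposite objectives in the zero-sum game, so that $\ell$ acts as the Maximizer of $x$, $r$ acts as the Minimizer of $x$, and consequently \eqref{eq:IC_ell} becomes the Maximizer's obedience constraint while \eqref{eq:IC_r} becomes the Minimizer's, with the inequalities pointing in the right directions. No optimization or fixed-point machinery is needed; the substance of the lemma is the observation itself.
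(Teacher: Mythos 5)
Your proof is correct and is essentially the paper's own argument: the paper likewise just rewrites \eqref{eq:IC_ell}--\eqref{eq:IC_r} as \eqref{eq:IC_ellsum}--\eqref{eq:IC_rsum} and observes that, with agent $\ell$ as the Maximizer and agent $r$ as the Minimizer, these are exactly the obedience constraints making $\pi$ a correlated equilibrium of the auxiliary game, the positive marginals ensuring all conditional beliefs (and hence all obedience constraints) are well-defined. Nothing further is needed.
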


Note that under the mechanism design interpretation $\pi$ is an exogenous part of the environment while $x$ is endogenous. In the auxiliary game the roles are exactly flipped: $x$ is an exogenous while $\pi$ is endogenous.

\begin{proposition}\label{prop:main}
Let $\pi\in\Pi$ and let $x$ be some mechanism. The following are equivalent.
\begin{enumerate}
    \item[(i)] $x$ is IC under $\pi$.
    \item[(ii)] For each type $\theta_i$ of each agent $i$, $E_{\pi}[x(\theta_i', \btheta_{-i})|\theta_i]=E_{\pi}[x(\btheta)]$, for any report $\theta_i'$.
\end{enumerate}
\vspace{5pt}
Moreover, if $x$ is IC then $E_{\pi}[x(\btheta)]  = \bar{x}$, where $\displaystyle \bar x = \max_{\sigma_{\ell} \in \Delta \Theta_{\ell}}
    \min_{\sigma_r \in \Delta \Theta_r}
    \sum_{\theta_{\ell}}\sum_{\theta_r} 
    \sigma_{\ell}(\theta_{\ell})\sigma_r(\theta_r)
    x(\theta_{\ell}, \theta_r)$
is the maximin value of the auxiliary game.
\end{proposition}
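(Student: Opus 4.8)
The implication (ii)$\Rightarrow$(i) is immediate: writing $V:=E_{\pi}[x(\btheta)]$, if for every type $\theta_i$ and every report $\theta_i'$ the conditional expectation in (ii) equals $V$, then \eqref{eq:IC_ellsum} and \eqref{eq:IC_rsum} both hold with equality, so $x$ is IC. For the converse and the ``moreover'' claim, assume $x$ is IC; by Lemma~\ref{prop:auxZeroSum} this says exactly that $\pi$ is a correlated equilibrium of the zero-sum game $G$. The plan is: (1) pin down $V=\bar x$; then (2) exploit repeatedly the trivial observation that a weighted average with \emph{strictly positive} weights can equal its own upper (resp.\ lower) bound only if every summand equals that bound --- this is the single point where positivity of the marginals, i.e.\ the standing assumption $\pi\in\Pi$, is used.

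For step~(1): multiplying \eqref{eq:IC_rsum} by $\pi_r(\theta_r)>0$ and summing over $\theta_r$ gives $V\le\sum_{\theta_{\ell}}\pi_{\ell}(\theta_{\ell})x(\theta_{\ell},\theta_r')$ for every $\theta_r'$, so the Maximizer playing the marginal $\pi_{\ell}$ secures at least $V$ against every reply, whence $\bar x\ge V$. Symmetrically, multiplying \eqref{eq:IC_ellsum} by $\pi_{\ell}(\theta_{\ell})$ and summing over $\theta_{\ell}$ gives $\sum_{\theta_r}\pi_r(\theta_r)x(\theta_{\ell}',\theta_r)\le V$ for every $\theta_{\ell}'$, so the Minimizer playing $\pi_r$ holds the Maximizer to at most $V$ and $\bar x\le V$. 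Thus $V=\bar x$ (the ``moreover'' claim), and $\pi_{\ell},\pi_r$ are optimal strategies; moreover, taking the $\pi_r$-average of the first family of inequalities and the $\pi_{\ell}$-average of the second yields the product identity $\sum_{\theta_{\ell},\theta_r}\pi_{\ell}(\theta_{\ell})\pi_r(\theta_r)x(\theta_{\ell},\theta_r)=\bar x$.

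For step~(2), I would first record two ``collapse'' facts. (a) The second family of inequalities from step~(1) reads precisely $f(\theta_{\ell}')\le\bar x$ for all $\theta_{\ell}'$, where $f(\theta_{\ell}'):=\sum_{\theta_r}\pi_r(\theta_r)x(\theta_{\ell}',\theta_r)$, while the product identity says the $\pi_{\ell}$-weighted average of $f$ equals $\bar x$; full support of $\pi_{\ell}$ then forces $f(\theta_{\ell}')=\bar x$ for every $\theta_{\ell}'\in\Theta_{\ell}$ (and symmetrically on the $r$-side). (b) For each $\theta_{\ell}$, obedience means $\theta_{\ell}$ is a best reply to the belief $\pi(\cdot|\theta_{\ell})$, hence $a(\theta_{\ell}):=E_{\pi}[x(\theta_{\ell},\btheta_r)|\theta_{\ell}]=\max_{\theta_{\ell}''}E_{\pi}[x(\theta_{\ell}'',\btheta_r)|\theta_{\ell}]$, and this best-reply payoff is at least $\bar x$, since against any fixed mixed strategy of the Minimizer the Maximizer can guarantee the maximin value; because the $\pi_{\ell}$-weighted average of $a$ equals $\bar x$ and the weights are positive, $a(\theta_{\ell})=\bar x$ for every $\theta_{\ell}$, and symmetrically $E_{\pi}[x(\btheta_{\ell},\theta_r)|\theta_r]=\bar x$ for every $\theta_r$.

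Combining (a) and (b) then gives (ii). Fix $\theta_{\ell}'\in\Theta_{\ell}$. For every $\theta_{\ell}$, \eqref{eq:IC_ellsum} yields $E_{\pi}[x(\theta_{\ell}',\btheta_r)|\theta_{\ell}]\le a(\theta_{\ell})=\bar x$, while averaging this quantity over $\theta_{\ell}$ with weights $\pi_{\ell}(\theta_{\ell})$ returns $\sum_{\theta_r}\pi_r(\theta_r)x(\theta_{\ell}',\theta_r)=f(\theta_{\ell}')=\bar x$ by (a); positivity of the weights then forces $E_{\pi}[x(\theta_{\ell}',\btheta_r)|\theta_{\ell}]=\bar x=V$ for every $\theta_{\ell}$, which is (ii) for agent $\ell$, and the argument for agent $r$ is symmetric. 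I expect this last step to be the main obstacle: it is where one passes from the ``on-path'' conditional payoffs $a(\theta_{\ell})$ to the ``off-path'' ones $E_{\pi}[x(\theta_{\ell}',\btheta_r)|\theta_{\ell}]$ with $\theta_{\ell}'\ne\theta_{\ell}$, and it genuinely needs the incentive constraints, the value identity $V=\bar x$, and full support of the marginals all at once --- without the last, the averaging arguments, hence the whole conclusion, break down.
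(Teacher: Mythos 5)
Your proof is correct, and its second half takes a genuinely different route from the paper's. The first half coincides in substance with the paper's argument: aggregating the incentive constraints against the full-support marginals yields the product identity $E_{\pi}[x(\btheta)]=\sum_{\theta_{\ell},\theta_r}\pi_{\ell}(\theta_{\ell})\pi_r(\theta_r)x(\theta_{\ell},\theta_r)$, which in the paper appears as the deviation-to-the-marginal inequality \eqref{eq:disobey} together with its mirror image. The divergence is in how the interim expectations are tied to $\bar x$. The paper first uses that equality plus full support to conclude that every obedience constraint binds, and then invokes an arbitrary Nash equilibrium $\sigma$ of the auxiliary zero-sum game --- whose existence rests on the minimax theorem --- and compares the interim values to the equilibrium value via deviations to $\pi(\cdot|\theta_r)$. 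You instead read the aggregated constraints as saying that $\pi_{\ell}$ secures $V:=E_{\pi}[x(\btheta)]$ against every reply while $\pi_r$ caps the Maximizer at $V$, so the marginals themselves are optimal strategies and $V=\bar x$ with no appeal to equilibrium existence or to minimax duality; the rest is handled by three applications of the ``positive-weight average attaining its bound'' collapse (your (a), your (b), and the final step), the last of which uses only the elementary fact that a best reply to any fixed mixed strategy earns at least the maximin value. Both arguments run on the same raw material (the IC inequalities, positive marginals, and the product identity), but yours is more self-contained --- it never needs existence of a Nash equilibrium of the auxiliary game --- whereas the paper's is shorter once the correlated-equilibrium lemma and standard zero-sum facts are taken as given.
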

Proposition \ref{prop:main} says that a mechanism is IC if and only if each type of each agent is indifferent between every possible report and each type's expectations of $x$ are given by the distribution-independent constant $\bar{x}$.
\begin{proof}
Any mechanism that satisfies (ii) is clearly IC. To show the converse let $\pi\in \Pi$ and let $x$ be an IC mechanism. By Lemma \ref{prop:auxZeroSum}, $\pi$ is a correlated equilibrium of the auxiliary game $G$ in which the Maximizer chooses $\theta_{\ell}$, the Minimizer chooses $\theta_r$ and the Maximizer's payoff from such an action profile is $x(\theta_{\ell}, \theta_r)$.

Suppose now that the Minimizer obeys while the Maximizer ignores his recommendation under $\pi$ and instead plays the mixed strategy $\pi_{\ell}$. As this cannot be profitable to him we get
\begin{equation*}
    \sum_{\theta_{\ell}}\sum_{\theta_r}\pi(\theta_{\ell}, \theta_r)x(\theta_{\ell}, \theta_r)
    \ge
    \sum_{\theta_{\ell}}\sum_{\theta_r}\pi(\theta_{\ell}, \theta_r)\sum_{\theta_{\ell}'}\pi_{\ell}(\theta_{\ell}')x(\theta_{\ell}', \theta_r).
\end{equation*}
But the last term is simply $\sum_{\theta_{\ell}}\sum_{\theta_r}\pi_{\ell}(\theta_{\ell})\pi_r(\theta_r)x(\theta_{\ell}, \theta_r)$ and so we obtain
\begin{align}
    \label{eq:disobey}
    \sum_{\theta_{\ell}}\sum_{\theta_r}\pi(\theta_{\ell}, \theta_r)x(\theta_{\ell}, \theta_r)
    &\ge \sum_{\theta_{\ell}}\sum_{\theta_r}\pi_{\ell}(\theta_{\ell})\pi_r(\theta_r)x(\theta_{\ell}, \theta_r).
\end{align}
The symmetric argument for the Minimizer implies that the opposite inequality to \eqref{eq:disobey} must also hold. We conclude that  \eqref{eq:disobey} must hold with equality.
Finally, since $\pi_{\ell}$ has full support if there were some pair $\theta_{\ell}, \theta_{\ell}'$ for which the inequality in the obedience constraint \eqref{eq:IC_ellsum} were strict then \eqref{eq:disobey} could not hold with equality. 
Hence \eqref{eq:IC_ellsum} and \eqref{eq:IC_rsum} must always bind. Thus, if player $i$ is recommended some action $\theta_i$ then he is indifferent between all actions and his interim expectation of $x$ is $\bar x_i(\theta_i)=E_{\pi}[x(\btheta)|\theta_i]$. We will now show that the interim expectations $\bar x_i(\theta_i)$ are actually all the same.

Let $\sigma=(\sigma_{\ell},\sigma_r)$ be a Nash equilibrium of $G$ and let $\bar x = \sum_{\theta_{\ell}}\sigma_{\ell}(\theta_{\ell})
    \sum_{\theta_r}\sigma_r(\theta_r) x(\theta_{\ell},\theta_r)$ be the Maximizer's expected payoff under $\sigma$. Then for any $\theta_r$ it holds that
\begin{align*}
    \bar x
    &\ge
    \sum_{\theta_{\ell}}\pi(\theta_{\ell}|\theta_r)
    \sum_{\tilde\theta_r}\sigma_r(\tilde\theta_r) x(\theta_{\ell},\tilde\theta_r)
    =
    \sum_{\tilde\theta_r}\sigma_r(\tilde\theta_r)
    \sum_{\theta_{\ell}}\pi(\theta_{\ell}|\theta_r)x(\theta_{\ell},\tilde\theta_r)\\
    &\geq
    \sum_{\theta_{\ell}}\pi(\theta_{\ell}|\theta_r) x(\theta_{\ell}, \theta_{r}) 
    =
    \bar x_r(\theta_r),
\end{align*}
where the first inequality holds since the mixed strategy $\pi(\cdot|\theta_r)$ is not a profitable deviation from $\sigma_{\ell}$; the second inequality follows from type $\theta_{r}$'s IC constraint for $\tilde\theta_r$ and the last equality is by definition.
Combining this inequality with the corresponding inequality for the other player we thus have that
\begin{equation*}
    \bar x_r(\theta_r) \le \bar x \le \bar x_{\ell}(\theta_{\ell}) \quad\forall \theta_{\ell}, \theta_r.
\end{equation*}
Since the terms on the left and the right hand side of the above inequalities are equal in expectation and all $\theta_{\ell}$ and $\theta_r$ occur with positive probability both inequalities above must always bind. That is to say, for each agent $i$,
$E_{\pi}[x(\theta_i', \btheta_{-i})|\theta_i]=\bar{x}$ for any $\theta_i$ and $\theta_i'$. Finally, note that $\bar{x} = \max_{\sigma_{\ell} \in \Delta \Theta_{\ell}}
    \min_{\sigma_r \in \Delta \Theta_r}
    \sum_{\theta_{\ell}}\sum_{\theta_r} 
    \sigma_{\ell}(\theta_{\ell})\sigma_r(\theta_r)
    x(\theta_{\ell}, \theta_r)$ holds since $(\sigma_{\ell}, \sigma_{r})$ is a Nash equilibrium of the zero sum game $G$.
\end{proof}

\section{Comparative statics for implementation} \label{sec:compStatics}
In this section, we study how the set of implementable mechanisms depends on the type distribution. We define a preorder on distributions and derive a monotone comparative statics result for the correspondence $\pi\mapsto\mathcal{X}(\pi)$. We conclude that correlation has a restrictive effect.

\begin{definition}
Let $\tau^0,\tau^1,\dots,\tau^k \in  \Delta\Theta_{-i}$ be any beliefs over types of agent $-i$. Then $\{\tau^1,\dots,\tau^k\}$ is said to \textit{span} $\tau^0$ if there exist coefficients $\alpha_j \in \mathbb{R}$ such that 
\begin{equation*}
    \tau^0(\theta_{-i})
    = 
    \sum_{j=1}^k \tau^j(\theta_{-i}) \alpha_j \quad \forall \theta_{-i}.
\end{equation*}
Given joint type distributions $\pi, \tilde \pi \in \Pi$, $\pi$ is said to span $\tilde \pi$ if for all $\theta_{i}$, $\{\pi(\cdot|\tilde\theta_{i})\colon\tilde\theta_{i}\in\Theta_{i}\}$ spans $\tilde\pi(\cdot|\theta_{i})$, $i=\ell,r$.
\end{definition}
Hence $\pi$ spans $\tilde \pi$ if each interim belief an agent can hold under $\tilde \pi$ is a linear combination of some interim beliefs that he can hold under $\pi$. Spanning is reflexive and transitive but not anti-symmetric and therefore defines a preorder.
\begin{example} \label{ex:spanIndep}
Let $\pi \in \Pi$ with marginals $\pi_i$. Then $\pi$ spans the independent type distribution $\tilde \pi = \pi_{\ell}\pi_r$ because
\begin{equation*}
    \pi_{i}(\theta_{i}) = \sum_{\theta_{-i}}\pi(\theta_{i}|\theta_{-i})\pi_{-i}(\theta_{-i}) \quad \forall \theta_{i} \forall i.
\end{equation*}
\end{example}
\begin{example} \label{ex:spanRank}
A joint distribution $\pi\in\Pi$ spans every other joint distribution $\tilde\pi\in\Pi$ if and only if the matrix $(\pi(\theta_{\ell}, \theta_r))\in\mathbb{R}^{\Theta_{\ell}\times\Theta_r}$ has full column-rank and full row-rank. This is exactly the condition introduced by \citeauthor{Cremer.1985} (see Assumption 4 in their \citeyear{Cremer.1985} paper and Theorem 1 in their \citeyear{Cremer.1988} paper).
\end{example}

Our first application of the spanning relation shows that the set of IC mechanisms cannot shrink when passing from $\pi$ to some other type distribution $\tilde\pi$ that is spanned by $\pi$.

\begin{proposition} \label{prop:spanning}
Let $\pi, \tilde \pi \in \Pi$ be type distributions. If $\pi$ spans $\tilde\pi$ then
\begin{equation*}
    \mathcal{X}(\pi)\subset\mathcal{X}(\tilde\pi).
\end{equation*}
\end{proposition}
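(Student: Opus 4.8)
The plan is to reduce everything to the characterization in Proposition~\ref{prop:main}, which says that a mechanism $x$ is IC under a distribution $\rho$ precisely when, for every agent $i$, every type $\theta_i$, and every report $\theta_i'$, the interim expectation $\sum_{\theta_{-i}}\rho(\theta_{-i}\mid\theta_i)\,x(\theta_i',\theta_{-i})$ equals $E_{\rho}[x(\btheta)]$ --- in particular, a number that does not depend on the report $\theta_i'$. Equivalently, writing $x(\theta_i',\cdot)\in\mathbb{R}^{\Theta_{-i}}$ for the vector of probabilities when agent $i$ reports $\theta_i'$, the inner product of $x(\theta_i',\cdot)$ with every interim belief $\rho(\cdot\mid\theta_i)$ is one and the same constant. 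So it suffices to transport this ``constant inner product'' property from $\pi$ to $\tilde\pi$.

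First I would fix $x\in\mathcal{X}(\pi)$ and, say, agent $\ell$. Proposition~\ref{prop:main} gives a constant $\bar x=E_{\pi}[x(\btheta)]$ (the maximin value of the auxiliary game) with $\sum_{\theta_r}\pi(\theta_r\mid\tilde\theta_\ell)\,x(\theta_\ell',\theta_r)=\bar x$ for all $\tilde\theta_\ell,\theta_\ell'\in\Theta_\ell$. Next, fix a type $\theta_\ell$. Since $\pi$ spans $\tilde\pi$, I can write the interim belief $\tilde\pi(\cdot\mid\theta_\ell)=\sum_{\tilde\theta_\ell}\alpha_{\tilde\theta_\ell}\,\pi(\cdot\mid\tilde\theta_\ell)$ for some real coefficients $\alpha_{\tilde\theta_\ell}$ depending on $\theta_\ell$; summing this identity over $\theta_r$ and using that both sides are probability vectors shows $\sum_{\tilde\theta_\ell}\alpha_{\tilde\theta_\ell}=1$. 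Then for any report $\theta_\ell'$, by linearity of the inner product,
\[
\sum_{\theta_r}\tilde\pi(\theta_r\mid\theta_\ell)\,x(\theta_\ell',\theta_r)
=\sum_{\tilde\theta_\ell}\alpha_{\tilde\theta_\ell}\sum_{\theta_r}\pi(\theta_r\mid\tilde\theta_\ell)\,x(\theta_\ell',\theta_r)
=\Big(\sum_{\tilde\theta_\ell}\alpha_{\tilde\theta_\ell}\Big)\bar x=\bar x .
\]
Hence type $\theta_\ell$'s interim expectation of $x$ under $\tilde\pi$ equals $\bar x$ regardless of the report, so $\theta_\ell$ is indifferent among all reports and \eqref{eq:IC_ell} holds at $\tilde\pi$. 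Running the symmetric argument for agent $r$ yields \eqref{eq:IC_r} at $\tilde\pi$, hence $x\in\mathcal{X}(\tilde\pi)$.

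I do not expect a serious obstacle: all the real work is already done in Proposition~\ref{prop:main}, which collapses the family of IC constraints into the single report-independent number $\bar x$, after which spanning plus bilinearity finishes the argument. The one subtlety worth flagging is that the spanning coefficients $\alpha_{\tilde\theta_\ell}$ need not be nonnegative, so $\tilde\pi(\cdot\mid\theta_\ell)$ is in general not a \emph{mixture} of the beliefs $\pi(\cdot\mid\tilde\theta_\ell)$; what rescues the computation is that the map $\tau\mapsto\sum_{\theta_r}\tau(\theta_r)\,x(\theta_\ell',\theta_r)$ is linear and already constant (equal to $\bar x$) on each of those beliefs, so any affine combination of them --- coefficients summing to one --- still returns $\bar x$. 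As a by-product the argument shows that the maximin value, and hence the principal's payoff from $x$, is unchanged in passing from $\pi$ to $\tilde\pi$; I would note this explicitly, as it is what drives the comparative statics on profitability later on.
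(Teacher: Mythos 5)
Your proof is correct and follows essentially the same route as the paper: both reduce to the report-independent constant $\bar x$ from Proposition~\ref{prop:main}, expand $\tilde\pi(\cdot\mid\theta_i)$ in the spanned beliefs $\pi(\cdot\mid\tilde\theta_i)$, and use linearity. The only (cosmetic) difference is that you exploit that the spanning coefficients sum to one, whereas the paper sidesteps this by working with $x(\theta_i',\theta_{-i})-\bar x$ so that each inner product is zero.
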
 
\begin{proof}
By Proposition \ref{prop:main} a mechanism $x$ is IC under $\pi$ if and only if 
\begin{equation*}
    \sum_{\theta_{-i}}\pi(\theta_{-i}|\theta_i)(x(\theta_i',\theta_{-i})-\bar{x}) = 0 \quad \forall \theta_i, \theta_i', i=\ell,r.
\end{equation*}
Now let $x$ be IC under $\pi$ and consider some $\tilde \pi\in \Pi$ spanned by $\pi$. By definition, there exist coefficients $\alpha_{i}(\theta_{i}, \tilde\theta_{i})$ such that 
\begin{equation*}
    \tilde \pi(\theta_{-i}|\theta_{i})=\sum_{\tilde\theta_{i}}\pi(\theta_{-i}|\tilde\theta_{i}) \alpha_{i}(\theta_{i}, \tilde\theta_{i}) \quad \forall \theta_i, \theta_{-i}, i=\ell,r.
\end{equation*} 
But then $x$ must also be IC under $\tilde\pi$ because for all $\theta_i, \theta_i'$:
\begin{align*}
    \sum_{\theta_{-i}} \tilde\pi(\theta_{-i}|\theta_{i}) (x(\theta_i', \theta_{-i})-\bar{x})
    = 
    \sum_{\tilde\theta_{i}}\alpha_{i}(\theta_{i}, \tilde\theta_i)\sum_{\theta_{-i}} \pi(\theta_{-i}|\tilde\theta_i) (x(\theta_i', \theta_{-i})-\bar{x}) = 0.
\end{align*}
\end{proof}

The proof (appendix) of the next result is another application of the spanning relation.

\begin{proposition} \label{prop:indepAndFullRank} 
Let $\pi\in\Pi$ with marginals $\pi_i$ and let $x$ be some mechanism. Then:
\begin{enumerate}
    \item If $x$ is IC under $\pi$ then $x$ is also IC under the independent type distribution $\tilde\pi=\pi_{\ell}\pi_r$.
    \item If the matrix $(\pi(\theta_{\ell}, \theta_r)) \in \mathbb{R}^{\Theta_{\ell} \times \Theta_r}$ has full rank then only constant mechanisms are IC.
\end{enumerate}
The maximal elements of the spanning preorder are exactly the full-rank distributions and its minimal elements are exactly the independent distributions.
\end{proposition}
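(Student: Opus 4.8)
All three statements follow by assembling Examples~\ref{ex:spanIndep}--\ref{ex:spanRank} with Propositions~\ref{prop:main} and~\ref{prop:spanning}, so the work is mostly organizational. Part~1 is immediate: by Example~\ref{ex:spanIndep}, $\pi$ spans $\pi_\ell\pi_r$, hence $\mathcal X(\pi)\subseteq\mathcal X(\pi_\ell\pi_r)$ by Proposition~\ref{prop:spanning}.

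For part~2 I would use the orthogonality reformulation of incentive compatibility from the proof of Proposition~\ref{prop:spanning}: $x$ is IC under $\pi$ if and only if, for every agent $i$ and all $\theta_i,\theta_i'$, the vector $x(\theta_i',\cdot)-\bar x\,\mathbf{1}\in\mathbb R^{\Theta_{-i}}$ is orthogonal to the conditional belief $\pi(\cdot\mid\theta_i)$. The beliefs $\{\pi(\cdot\mid\theta_i):\theta_i\in\Theta_i\}$ are, up to positive rescaling, the rows ($i=\ell$) or columns ($i=r$) of the matrix $(\pi(\theta_\ell,\theta_r))$; if this matrix has full rank, then for at least one agent $i$ the span of these beliefs is all of $\mathbb R^{\Theta_{-i}}$, and orthogonality to a spanning set forces $x(\theta_i',\cdot)-\bar x\,\mathbf{1}=0$ for every $\theta_i'$, i.e.\ $x\equiv\bar x$ is constant. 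Equivalently, in the paper's language: a full-rank $\pi$ spans every $\tilde\pi\in\Pi$ by Example~\ref{ex:spanRank}, so $x$ is IC under every $\tilde\pi$; then Proposition~\ref{prop:main}(ii) applied to product distributions with varying full-support marginals, together with a closure argument, forces $x$ to be constant.

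For the last sentence I would first record the transparent reformulation: writing $\mathrm{row}(\cdot),\mathrm{col}(\cdot)$ for the row and column spaces of the joint matrix, $\rho$ spans $\pi$ if and only if $\mathrm{row}(\pi)\subseteq\mathrm{row}(\rho)$ and $\mathrm{col}(\pi)\subseteq\mathrm{col}(\rho)$; this is just the definition of spanning, since an agent's conditional beliefs span the corresponding row or column space. From here, the minimal elements: $\pi_\ell\pi_r$ has rank $1$ with $\mathrm{row}=\mathrm{span}(\pi_r)$ and $\mathrm{col}=\mathrm{span}(\pi_\ell)$, so if $\pi_\ell\pi_r$ spans $\rho$ then every row of $\rho$ is a multiple of $\pi_r$ and every column a multiple of $\pi_\ell$; since the row and column sums of $\rho$ equal its (positive) marginals, this pins down $\rho=\rho_\ell\pi_r=\pi_\ell\rho_r$, and marginalizing yields $\rho=\pi_\ell\pi_r$. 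Hence an independent distribution spans only itself and is therefore minimal; conversely, since every $\pi$ spans $\pi_\ell\pi_r$, a minimal $\pi$ must also be spanned by $\pi_\ell\pi_r$, so $\pi=\pi_\ell\pi_r$. For the maximal elements: a full-rank $\pi$ spans everything (Example~\ref{ex:spanRank}) and is therefore maximal; conversely, if $\pi$ is not full rank then $\mathrm{row}(\pi)$ or $\mathrm{col}(\pi)$ is a proper subspace, and I would construct a valid distribution $\rho$ with $\mathrm{row}(\pi)\subsetneq\mathrm{row}(\rho)$ and $\mathrm{col}(\pi)\subseteq\mathrm{col}(\rho)$ --- e.g.\ a small rank-one perturbation $\rho=\pi+t\,uw^{\top}$ with $w\notin\mathrm{row}(\pi)$ and $u$ chosen so as to also enlarge the column space, renormalized and with $t$ small and sign-compatible so that $\rho$ stays a distribution; then $\rho$ spans $\pi$ but, by the reformulation above, $\pi$ cannot span $\rho$, so $\pi$ is not maximal.

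The routine parts are part~1 and the bookkeeping of the orthogonality conditions. The steps that need genuine care are the linear algebra in part~2 --- turning ``full rank of the joint matrix'' into ``some agent's conditional beliefs span the whole space'', which is where one must be careful if $|\Theta_\ell|\neq|\Theta_r|$ and read ``full rank'' as rank $\min(|\Theta_\ell|,|\Theta_r|)$ --- and the explicit construction of the strictly dominating distribution for the ``maximal $\Rightarrow$ full rank'' direction, where the perturbation must respect nonnegativity when $\pi$ does not have full support.
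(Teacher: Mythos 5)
Parts 1 and 2 and your treatment of the minimal elements are correct and essentially the paper's own argument: part 1 is Proposition \ref{prop:spanning} plus Example \ref{ex:spanIndep}; part 2 is exactly the paper's step of noting that, under full rank (WLOG $|\Theta_\ell|\ge|\Theta_r|$), agent $\ell$'s conditional beliefs span $\mathbb{R}^{\Theta_r}$, so orthogonality of $x(\theta_\ell',\cdot)-\bar x$ to a spanning family forces $x\equiv\bar x$; and your rank-one argument pinning down $\rho=\pi_\ell\pi_r$ for minimality matches the appendix.

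The gap is in the maximality claim, in both directions. Forward direction: you argue ``full rank $\Rightarrow$ spans everything (Example \ref{ex:spanRank}) $\Rightarrow$ maximal.'' But with your own (correct) reading of full rank as rank $\min(|\Theta_\ell|,|\Theta_r|)$, Example \ref{ex:spanRank} does not apply when $|\Theta_\ell|\neq|\Theta_r|$: a $3\times 2$ rank-$2$ distribution has a two-dimensional column space, so agent $r$'s beliefs under an arbitrary $\tilde\pi$ need not lie in it, and such a $\pi$ does \emph{not} span everything --- yet it is still maximal. The correct argument is the one the paper gives (and which your row/column-space reformulation already sets up): if $\tilde\pi$ spans $\pi$ then $\mathrm{row}(\pi)\subseteq\mathrm{row}(\tilde\pi)$ and $\mathrm{col}(\pi)\subseteq\mathrm{col}(\tilde\pi)$, and since both spaces of $\tilde\pi$ have dimension at most $\min(|\Theta_\ell|,|\Theta_r|)=\mathrm{rank}(\pi)$, the inclusions are equalities, so $\pi$ spans $\tilde\pi$ back.

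Converse direction (``maximal $\Rightarrow$ full rank''): your perturbation $\rho=\pi+t\,uw^{\top}$ is not shown to span $\pi$, and for generic choices it does not. Adding $t\,u w^\top$ can rotate the row space so that it no longer contains $\mathrm{row}(\pi)$: e.g.\ if $\pi=\pi_\ell\pi_r$ with uniform $\pi_\ell$ and $u=\mathbf{1}$, every row of $\rho$ equals $\tfrac{1}{|\Theta_\ell|}\pi_r+tw$, so $\mathrm{row}(\rho)$ is a line not containing $\pi_r$ and $\rho$ fails to span $\pi$. Making the construction work requires specific choices you do not supply (perturb a single row that is linearly dependent on the others, take $w$ with $\sum_{\theta_r}w(\theta_r)=0$, $w\notin\mathrm{row}(\pi)$, and $w\ge 0$ on the zero set of that row to handle the nonnegativity problem you flag but leave open; one then checks $e_{\theta_\ell^0}\in\mathrm{col}(\rho)$ using $w\notin\mathrm{row}(\pi)$). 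The paper avoids all of this: it takes a full-rank distribution $\tilde\pi$ that spans $\pi$; maximality forces $\pi$ to span $\tilde\pi$, so $\mathrm{row}(\pi)$ contains a $\min(|\Theta_\ell|,|\Theta_r|)$-dimensional space and $\pi$ is full rank --- no perturbation of $\pi$, hence no support or row-space-containment issues, is needed.
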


\cite{Cremer.1985} show in a setting with transfers that  full rank correlation makes it possible to implement any allocation rule while paying zero information rents.\footnote{
The full-rank condition is often seen as generic. In many applications, though, it is not satisfied even when types are correlated. For example, assume that there exists a finite underlying state of the world $\omega\in\{1,\dots k\}$ such that $\theta_{\ell}$ and $\theta_r$ are independent given $\omega$. That is, 
$\pi(\theta_{\ell}, \theta_r|\omega)=\pi_{\ell}(\theta_{\ell}|\omega)\pi_r(\theta_r|\omega) \quad \forall \theta_{\ell}, \theta_r, \omega$.
Then
\begin{equation*}
    \pi(\theta_{\ell},\theta_r) = \sum_{\omega=1}^k\pi_{\ell}(\theta_{\ell}|\omega)\pi_r(\theta_r|\omega)\Pr(\omega).
\end{equation*}
and so each column $\pi(\cdot, \theta_r)$ of the matrix $(\pi(\theta_{\ell}, \theta_r))_{\theta_{\ell}, \theta_r}$ is a linear combination of the $k$ vectors $\pi_{\ell}(\cdot|\omega)$, $\omega=1,\dots,k$ (with coefficients $\alpha_{\theta_r}( \omega)=\pi_r(\theta_r|\omega)\Pr(\omega)$). Hence 
$
    \mathrm{rank}(\pi)\le k.
$}
We show that under the same full-rank condition only mechanisms that ignore the agents' reports are IC. Absent full rank correlation, any mechanism that is IC under correlation must also be IC when types are independent. This shows that the spirit of \citeauthor{Cremer.1985}'s results is inverted in our setting. The next example illustrates this difference.

\begin{example} \label{ex:intro}
Assume $\Theta_\ell=\Theta_r=\{-1,1\}$, $\pi_\ell=\pi_r=\frac{1}{2}$ and $v(\theta) = \theta_{\ell}\theta_r$. Both options yield the principal an ex-ante expected payoff of $0$ while the first best mechanism $x^*$ would choose $L$ iff $\theta_{\ell}=\theta_r$ and yield $E[v(\btheta)x^*(\btheta)]=\frac{1}{2}>0$. If types are independent then $x^*$ is actually IC because  from each agent's perspective, any report will lead to the same probability of $L$. Now assume instead that types are correlated and that $\pi$ is given by
\vspace{-5pt}
\begin{center}
\begin{tabular}{c|cc|}
           \multicolumn{1}{c}{} & \multicolumn{1}{c}{$-1$}  & \multicolumn{1}{c}{$1$} \\\cline{2-3}
           $-1$ & $0.25-\varepsilon$ & $0.25+\varepsilon$\\
           $1$ & $0.25+\varepsilon$ & $0.25-\varepsilon$ \\\cline{2-3}
\end{tabular}
\end{center}
where $0<\varepsilon \le \frac{1}{4}$ is arbitrary. Then $x^*$ is not IC anymore: For example, type $1$ of agent $\ell$ would infer from his type that the other agent's type is probably $-1$ and would therefore claim to be type $-1$ instead of being truthful. Since the distribution matrix has full rank, Proposition \ref{prop:indepAndFullRank} shows that the only remaining IC mechanisms are constant.
\end{example}
This example also illustrates how more correlated distributions make implementation harder because agents become more informed about each other's types.

\section{Profitable mechanisms} \label{sec:profMech}

In this section we investigate when the principal can design a profitable mechanism. We attack this question from two different angles. Our first characterization is in terms of the principal's objective and applies when the principal is ex-ante indifferent between the two options. The second characterization is in terms of a related optimal transport problem. It also yields an explicit characterization of incentive-compatible mechanisms under independence.

\subsection{The role of the objective}

\begin{definition}
The principal's objective is said to be additive if there exist functions $v_i\colon\Theta_i\to\mathbb{R}$ such that
\begin{equation*}
    v(\theta_\ell,\theta_r) = v_\ell(\theta_\ell) + v_r(\theta_r) \quad\forall \theta_\ell, \theta_r.
\end{equation*}
Given $\pi\in\Pi$ the objective is said to be $\pi$-additive if there exist coefficients $\lambda_{\ell}(\theta_{\ell}, \tilde\theta_{\ell})$, $\lambda_r(\theta_r, \tilde\theta_r)\in \mathbb{R}$ such that
\begin{equation} \label{eq:piAdd}
    v(\theta_{\ell}, \theta_r)\pi(\theta_{\ell},\theta_r) = \sum_{\tilde\theta_{\ell}}\lambda_{\ell}(\theta_{\ell}, \tilde\theta_{\ell}) \pi(\theta_r|\tilde\theta_{\ell}) + \sum_{\tilde\theta_r}\lambda_r(\theta_r, \tilde\theta_r) \pi(\theta_{\ell}|\tilde\theta_r) \quad \forall \theta_{\ell}, \theta_r.
\end{equation}
\end{definition}
Additivity is a special case of $\pi$-additivity (take $\lambda_i(\theta_i,\tilde\theta_i) = v_i(\theta_i)\pi_i(\tilde\theta_i) \bm{1}_{(\tilde{\theta}_{i} = \theta_{i})}$) and it is easily seen that the two concepts coincide when $\pi=\pi_{\ell}\pi_r$. To interpret $\pi$-additivity let the type distribution by $\pi\in\Pi$ and consider some mechanism $x$. When $v$ is $\pi$-additive we then get from \eqref{eq:piAdd} that
\begin{equation*}
    E_\pi[v(\btheta)x(\btheta)]=
   \sum_{\theta_{\ell},\tilde\theta_{\ell}} \lambda_{\ell}(\theta_{\ell},\tilde \theta_{\ell}) E_\pi[x(\theta_{\ell},\btheta_r)|\tilde \theta_{\ell}]+\sum_{\theta_r,\tilde\theta_r}  \lambda_r(\theta_{\ell},\tilde \theta_r) E_\pi[x(\btheta_{\ell},\theta_r)|\tilde \theta_r]
\end{equation*}
so that $E_\pi[v(\btheta)x(\btheta)]$ is a linear combination of the potential expected payoffs $E_\pi[x(\theta_i, \btheta_{-i})|\tilde \theta_i]$ of types $\tilde \theta_i$ from any (mis-)report $\theta_i$. If $x$ is IC then   $E_\pi[v(\btheta)x(\btheta)]$ is the principal's expected payoff from $x$ and the ``misreporting expectations'' must all coincide with the maximin value $\bar{x}$. Hence replacing $x$ by the constant mechanism $\tilde{x}\equiv\bar{x}$ does not change the principal's payoff and $x$ cannot be profitable. A necessary condition for the existence of a profitable mechanism is thus that the principal's objective is not $\pi$-additive. If the principal is ex-ante indifferent between $L$ and $R$ then this condition is also sufficient

\begin{proposition} \label{prop:piAdd}
Let types be distributed according to $\pi\in\Pi$. A profitable mechanism can exist only if the principal's objective is not $\pi$-additive. If $E_{\pi}[v(\btheta)]=0$ then a profitable mechanism exists if and only if the principal's objective is not $\pi$-additive. In particular, if types are independent then a profitable mechanism exists if and only if the principal's objective is not additive.
\end{proposition}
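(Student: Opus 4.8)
The plan is to establish necessity by a direct computation from Proposition~\ref{prop:main}, and the sufficiency direction (under ex-ante indifference) by recasting $\pi$-additivity as an orthogonality condition and then perturbing the constant mechanism.

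For necessity I would suppose $v$ is $\pi$-additive with coefficients $\lambda_\ell,\lambda_r$ as in \eqref{eq:piAdd} and let $x$ be any IC mechanism. Multiplying \eqref{eq:piAdd} by $x(\theta_\ell,\theta_r)$ and summing over $\theta$, the right-hand side rearranges into $\sum_{\theta_\ell,\tilde\theta_\ell}\lambda_\ell(\theta_\ell,\tilde\theta_\ell)\,E_\pi[x(\theta_\ell,\btheta_r)\mid\tilde\theta_\ell]+\sum_{\theta_r,\tilde\theta_r}\lambda_r(\theta_r,\tilde\theta_r)\,E_\pi[x(\btheta_\ell,\theta_r)\mid\tilde\theta_r]$. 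By Proposition~\ref{prop:main} every misreporting expectation here equals the constant $\bar x$, so this collapses to $\bar x\big(\sum\lambda_\ell+\sum\lambda_r\big)$. Running the same computation for the (trivially IC) constant mechanism $x\equiv 1$, for which $\bar x=1$, identifies $\sum\lambda_\ell+\sum\lambda_r$ with $E_\pi[v(\btheta)]$. Hence $E_\pi[v(\btheta)x(\btheta)]=\bar x\,E_\pi[v(\btheta)]\le 0$, since $\bar x\ge 0$ and $E_\pi[v(\btheta)]\le 0$ by the normalization; so no IC mechanism is profitable, and this argument does not use ex-ante indifference.

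For sufficiency I would assume $E_\pi[v(\btheta)]=0$ and $v$ not $\pi$-additive, and work in $\mathbb{R}^\Theta$ with $\langle a,b\rangle=\sum_\theta a(\theta)b(\theta)$, writing $v\pi$ for the vector $\theta\mapsto v(\theta)\pi(\theta)$. Set $\mathcal{X}_0(\pi)=\{y\in\mathbb{R}^\Theta:E_\pi[y(\theta_i',\btheta_{-i})\mid\theta_i]=0\text{ for all }i,\theta_i,\theta_i'\}$; by Proposition~\ref{prop:main} this is a linear subspace and the IC polytope $\mathcal{X}(\pi)$ is $\mathcal{X}_0(\pi)$ shifted by constants. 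The linear functionals cutting out $\mathcal{X}_0(\pi)$ are $y\mapsto\langle\phi_{i,\theta_i,\theta_i'},y\rangle$ with, e.g., $\phi_{\ell,\theta_\ell,\theta_\ell'}(\tilde\theta_\ell,\tilde\theta_r)=\pi(\tilde\theta_r\mid\theta_\ell)\,\bm{1}\{\tilde\theta_\ell=\theta_\ell'\}$, so $\mathcal{X}_0(\pi)^\perp=\operatorname{span}\{\phi_{i,\theta_i,\theta_i'}:i,\theta_i,\theta_i'\}$; writing out a generic element of this span and matching it term-by-term with the right-hand side of \eqref{eq:piAdd} shows that $v\pi\in\mathcal{X}_0(\pi)^\perp$ precisely when $v$ is $\pi$-additive. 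Since $v$ is not $\pi$-additive I can pick $y\in\mathcal{X}_0(\pi)$ with $\langle v\pi,y\rangle>0$ (replacing $y$ by $-y$ if needed). For $t>0$ sufficiently small, $x:=\tfrac12+ty$ is valued in $(0,1)$ and is IC by Proposition~\ref{prop:main} (each interim expectation of $x$ equals $\tfrac12$, since $y\in\mathcal{X}_0(\pi)$ forces all its interim misreporting expectations, and therefore $E_\pi[y(\btheta)]$, to vanish), while $E_\pi[v(\btheta)x(\btheta)]=\tfrac12 E_\pi[v(\btheta)]+t\langle v\pi,y\rangle=t\langle v\pi,y\rangle>0$; so $x$ is profitable. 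Combined with necessity this gives the claimed equivalence, and the independent-types statement follows because $\pi$-additivity coincides with additivity when $\pi=\pi_\ell\pi_r$ (as already noted after the definition).

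The only step that needs genuine care is the orthogonality identification, i.e.\ tracking the two families of interim-misreporting functionals $\phi_{\ell,\cdot,\cdot}$ and $\phi_{r,\cdot,\cdot}$ and checking that their span equals exactly the set of functions of the form $\theta\mapsto\sum_{\tilde\theta_\ell}\lambda_\ell(\theta_\ell,\tilde\theta_\ell)\pi(\theta_r\mid\tilde\theta_\ell)+\sum_{\tilde\theta_r}\lambda_r(\theta_r,\tilde\theta_r)\pi(\theta_\ell\mid\tilde\theta_r)$; the rest is routine.
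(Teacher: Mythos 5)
Your proof is correct and follows essentially the same route as the paper's: both rest on Proposition \ref{prop:main}'s constancy of the misreporting expectations for necessity, and for sufficiency both identify $\pi$-additivity with $v\pi$ lying in the span $U$ of the interim-expectation functionals and then perturb a constant mechanism in a direction orthogonal to $U$ (the paper takes the explicit projection residual $\hat w$ and uses $\varepsilon(\hat w-\min\hat w)$, while you take an abstract $y\in U^\perp$ with $\langle v\pi,y\rangle>0$ and use $\tfrac12+ty$). Your $x\equiv 1$ trick to identify $\sum\lambda_\ell+\sum\lambda_r=E_\pi[v(\btheta)]$ is a tidy way to state the paper's observation that an IC mechanism earns the same as the constant mechanism $\bar x$, but it is the same substance.
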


The proof (appendix) works by projecting $v\pi$ on the linear subspace $U$ of functions that can be expressed in the form of the right hand side of \eqref{eq:piAdd}. Given ex-ante indifference the principal's expected payoff in an IC mechanism depends only on the part of $v\pi$ that is orthogonal to $U$. We construct a mechanism that yields a strictly positive payoff whenever this projection residual is nonzero.

\begin{continueexample}{ex:budget}
Consider again the budget allocation problem. Recall that
\begin{equation*}
    v(\theta_\ell, \theta_r)
    =
    E[y_L-y_R|\theta_\ell,\theta_r]
    =
    E[y_L|\theta_\ell,\theta_r]-E[y_R|\theta_\ell,\theta_r].
\end{equation*}
Let $h_i(\theta_\ell, \theta_r)=E[y_i|\theta_i,\theta_{-i}].$ If $h_i(\theta_\ell, \theta_r)$ depends only on $\theta_i$ then Proposition \ref{prop:piAdd} implies that there does not exist a profitable mechanism.
Hence a necessary condition for the existence of a profitable mechanism is that at least one department head has information that is relevant to the future return of the other department. 
Now assume that types are iid and that $h_\ell=h_r=h$. Then $E[y_\ell] = E[h(\theta_\ell, \theta_r)] = E[h(\theta_r, \theta_\ell)]=E[y_r]$ so that the principal is ex-ante indifferent. Then a profitable mechanism exists if, and only if $h(\theta_\ell, \theta_r) - h(\theta_r, \theta_\ell)$ is not additive.

\end{continueexample}

\subsection{The role of correlation}

Correlation between agent-types affects the principal through two distinct channels. Firstly, correlation affects the set of mechanisms in which agents find it optimal to be truthful (see Section \ref{sec:compStatics}). Secondly, fixing a mechanism and assuming that agents are truthful, correlation can increase or decrease the principal's expected payoff by concentrating more mass on specific type profiles. In this section we show that the principal's problem can be viewed as a problem of choosing an ``optimal correlation structure''.

We start by reinterpreting incentive-compatibility. The proof is in the appendix.
\begin{lemma} \label{la:altIC}
Let the type distribution be $\pi\in\Pi$. A mechanism $x$ is IC if and only if 
\begin{enumerate}
    \item Agents are ex-ante indifferent between reports: $E_{\pi}[x(\theta_i', \btheta_{-i})] = E_{\pi}[x(\theta_i'', \btheta_{-i})] \quad \forall \theta_i', \theta_i''\,\forall i$
    \item Their type realizations are uninformative: $E_{\pi}[x(\theta_i', \btheta_{-i})|\theta_i] = E_{\pi}[x(\theta_i', \btheta_{-i})] \quad \forall \theta_i, \theta_i'\, \forall i.$
\end{enumerate}
\end{lemma}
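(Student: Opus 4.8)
The plan is to read Lemma~\ref{la:altIC} off Proposition~\ref{prop:main}, adding only a one-line computation relating interim and ex-ante expectations of $x$.

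First, for the ``only if'' direction I would suppose $x$ is IC under $\pi$ and invoke Proposition~\ref{prop:main}(ii), which gives $E_{\pi}[x(\theta_i', \btheta_{-i})\mid\theta_i] = E_{\pi}[x(\btheta)]$ for all $\theta_i,\theta_i'$ and both $i$. Averaging both sides over $\theta_i$ with weights $\pi_i(\theta_i)$ and using the law of total expectation yields $E_{\pi}[x(\theta_i', \btheta_{-i})] = E_{\pi}[x(\btheta)]$ for every $\theta_i'$. Since $E_{\pi}[x(\btheta)]$ is a fixed constant, this shows at once that the ex-ante expectation from a report $\theta_i'$ does not depend on $\theta_i'$ (condition~1) and that it equals the interim expectation $E_{\pi}[x(\theta_i', \btheta_{-i})\mid\theta_i]$, which is hence independent of $\theta_i$ (condition~2).

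Second, for the ``if'' direction I would assume conditions 1 and 2 and argue directly from the definition of IC, without even needing Proposition~\ref{prop:main}. Condition~2 says $E_{\pi}[x(\theta_i', \btheta_{-i})\mid\theta_i]=E_{\pi}[x(\theta_i', \btheta_{-i})]$, and condition~1 says the latter is the same for every report $\theta_i'$; together, $E_{\pi}[x(\theta_i', \btheta_{-i})\mid\theta_i]$ is a constant that depends on neither $\theta_i$ nor $\theta_i'$. Consequently, in each incentive constraint --- e.g.\ $E_{\pi}[x(\theta_{\ell}, \btheta_r)\mid\theta_{\ell}] \ge E_{\pi}[x(\theta_{\ell}', \btheta_r)\mid\theta_{\ell}]$ from \eqref{eq:IC_ell}, and symmetrically \eqref{eq:IC_r} --- the two sides are equal, so the constraint holds (with equality) and $x$ is IC.

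I do not anticipate a real obstacle: the statement is essentially a repackaging of Proposition~\ref{prop:main}. The only point worth a moment's care is that the constant appearing in condition~2 is a priori agent-specific; but the ``if'' direction never uses that the two agents' constants agree, and in the ``only if'' direction Proposition~\ref{prop:main} already pins every such constant down to $E_{\pi}[x(\btheta)]$.
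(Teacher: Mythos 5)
Your proof is correct and follows essentially the same route as the paper's: the ``only if'' direction reads off Proposition~\ref{prop:main}(ii) (with the averaging over $\theta_i$ made explicit), and the ``if'' direction combines uninformativeness with ex-ante indifference to make every incentive constraint bind, exactly as in the paper's argument.
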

Ex-ante indifference is equivalent to IC under the independent type distribution $\pi_{\ell}\pi_r$. Uninformativeness implies that agents cannot gain any payoff-relevant information from their type about their opponent's type. Note that this is automatically satisfied if types are independent. Correlation therefore restricts the set of IC mechanisms by making the agents more informed which adds additional incentive-constraints. From this perspective, IC under correlation lies mid-way between IC under independence and IC under full information.

Lemma \ref{la:altIC} allows us to derive a necessary and sufficient criterion for the existence of a profitable mechanism. We need the following definition.
\begin{definition}
Two joint type distributions $\pi, \tilde\pi \in \Pi(\pi_{\ell},\pi_r)$ with the same marginals $\pi_i>0$ are said to be orthogonal if 
\begin{equation*}
    \mathrm{Cov}(\pi(\theta_i| \btheta_{-i}), \tilde\pi(\theta_i'|\btheta_{-i})) 
    = 0\quad \forall \theta_i, \theta_i' \,\forall i.
\end{equation*}

\end{definition}
Hence $\pi$ and $\tilde\pi$ are orthogonal if the random variables $\pi(\theta_i| \btheta_{-i})$ and $\tilde\pi(\theta_i'|\btheta_{-i})$ are uncorrelated for all $\theta_i, \theta_i'$, $i=\ell, r$. Note that \begin{equation*}
    \mathrm{Cov}(\pi(\theta_i| \btheta_{-i}), \tilde\pi(\theta_i'|\btheta_{-i})) = \sum_{\theta_{-i}}[\pi(\theta_i| \theta_{-i})-\pi_i(\theta_i)][\tilde\pi(\theta_i'|\theta_{-i})-\pi_i(\theta_i')]\pi_{-i}(\theta_{-i})
\end{equation*} 
and $\pi(\theta_i| \theta_{-i})-\pi_i(\theta_i)$ is the  update of type $\theta_{-i}$ about the probability of type $\theta_i$ under $\pi$. Clearly, if one of $\pi$ or $\tilde\pi$ is the independent type distribution $\pi_l\pi_r$ then orthogonality is automatically satisfied. Otherwise the condition says that for all $\theta_i, \theta_i'$, the vector $\pi(\theta_i| \cdot)-\pi_i(\theta_i) \in \mathbb{R}^{\Theta_{-i}}$ of possible belief updates of agent $-i$ about the probability of type $\theta_i$ under $\pi$ must be orthogonal to the vector of updates $\tilde\pi(\theta_i'| \cdot)-\pi_i(\theta_i)\in \mathbb{R}^{\Theta_{-i}}$ about the probability of $\theta_i'$ under $\tilde\pi$ under the inner product $\langle a, b \rangle = \sum_{\theta_{-i}}a(\theta_{-i})b(\theta_{-i})$ on $\mathbb{R}^{\Theta_{-i}}$.

The next result shows that the problem of finding a profitable mechanism is intricately related to the choice of an ``optimal correlation strucuture'': A profitable mechanism exists if and only if it is possible to find some alternative correlation structure that is orthogonal to the exogenously given one and such that --- under the alternative correlation structure (and with a suitably transformed objective) --- $L$ becomes the principal's ex-ante strictly preferred option. This can be phrased as a constrained optimal transport problem.\footnote{For an in-depth treatment of optimal transport see \cite{villani2009optimal}.}
\begin{proposition} \label{prop:transportation}
    Let the type distribution be $\pi\in\Pi$ and denote its marginals by $\pi_i$. Let $\hat{v}=v\pi/\pi_{\ell}\pi_r$. A profitable mechanism exists if and only if 

    \begin{equation} \label{eq:constrOptTransp}
        \left( 
        \max_{\tilde\pi \in \Pi(\pi_{\ell}, \pi_r)} E_{\tilde\pi}[\hat{v}(\btheta)] \,
        \mbox{ s.t. $\tilde\pi$ is orthogonal to $\pi$}
        \right)>0.
    \end{equation}
    In particular, if types are independent then a profitable mechanism exists if, and only if
        \begin{equation*}
        \max_{\tilde\pi \in \Pi(\pi_{\ell}, \pi_r)} E_{\tilde\pi}[v(\btheta)]>0.
    \end{equation*}
\end{proposition}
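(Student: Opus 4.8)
The plan is to set up an explicit, scale-preserving correspondence between incentive-compatible mechanisms and feasible points of the constrained transport problem in \eqref{eq:constrOptTransp}, after which the equivalence of the two ``positivity'' statements can be read off directly. Throughout I use the identity $E_\pi[v(\btheta)x(\btheta)]=E_{\pi_\ell\pi_r}[\hat v(\btheta)x(\btheta)]$, which is immediate from $\hat v=v\pi/\pi_\ell\pi_r$, together with the characterization from Lemma~\ref{la:altIC}: $x$ is IC iff (1) $E_\pi[x(\theta_i',\btheta_{-i})]$ is independent of $\theta_i'$ and (2) $E_\pi[x(\theta_i',\btheta_{-i})\mid\theta_i]$ is independent of $\theta_i$. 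A short computation ($E_{\pi_\ell\pi_r}[x]=\sum_{\theta_i}\pi_i(\theta_i)E_\pi[x(\theta_i,\btheta_{-i})]$ combined with (1)) shows the common value in (1) is the same for both agents and equals $\bar x:=E_{\pi_\ell\pi_r}[x]$.

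The technical core is one algebraic identity. For a nonnegative $y$ on $\Theta$ put $\tilde\pi:=y\cdot\pi_\ell\pi_r$ (pointwise product). Substituting $\tilde\pi(\theta_i'\mid\theta_{-i})=\pi_i(\theta_i')\,y(\theta_i',\theta_{-i})$ and $\pi(\theta_i\mid\theta_{-i})\pi_{-i}(\theta_{-i})=\pi(\theta_i,\theta_{-i})$ into the covariance formula defining orthogonality yields, whenever $E_\pi[y(\theta_i',\btheta_{-i})]=1$ for all $\theta_i'$ and $i=\ell,r$,
\[
\mathrm{Cov}\big(\pi(\theta_i\mid\btheta_{-i}),\tilde\pi(\theta_i'\mid\btheta_{-i})\big)=\pi_i(\theta_i)\pi_i(\theta_i')\big(E_\pi[y(\theta_i',\btheta_{-i})\mid\theta_i]-1\big).
\]
Hence, for such $y$: (i) $\tilde\pi$ is a probability distribution in $\Pi(\pi_\ell,\pi_r)$ — it is nonnegative, has total mass one, and has marginals $\pi_i$ because $E_\pi[y(\theta_i,\btheta_{-i})]=1$; and (ii) $y$ satisfies IC-condition (2) if and only if $\tilde\pi$ is orthogonal to $\pi$. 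Conversely, any $\tilde\pi\in\Pi(\pi_\ell,\pi_r)$ orthogonal to $\pi$ gives $y:=\tilde\pi/\pi_\ell\pi_r\geq 0$ with $E_\pi[y(\theta_i',\btheta_{-i})]=\tilde\pi_i(\theta_i')/\pi_i(\theta_i')=1$, so by the identity $y$ satisfies (1) and (2). Finally, if $x=\alpha y$ with $\alpha>0$ then $E_\pi[vx]=\alpha\,E_{\pi_\ell\pi_r}[\hat v\,y]=\alpha\,E_{\tilde\pi}[\hat v]$.

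These pieces give both directions. If a profitable IC $x$ exists, then $x\not\equiv 0$, hence $\bar x=E_{\pi_\ell\pi_r}[x]>0$ since $\pi_\ell\pi_r$ has full support; then $y:=x/\bar x$ has $E_\pi[y(\theta_i',\btheta_{-i})]\equiv 1$ and still satisfies (2), so $\tilde\pi:=x\,\pi_\ell\pi_r/\bar x$ is feasible for \eqref{eq:constrOptTransp} with value $E_{\tilde\pi}[\hat v]=E_\pi[vx]/\bar x>0$. Conversely, if \eqref{eq:constrOptTransp} has strictly positive value, pick a feasible $\tilde\pi$ with $E_{\tilde\pi}[\hat v]>0$, let $y:=\tilde\pi/\pi_\ell\pi_r$, and take $x:=\alpha y$ with $\alpha>0$ small enough that $x\in[0,1]^\Theta$ (possible because $\Theta$ is finite and the $\pi_i$ are positive); then $x$ satisfies (1) and (2), hence is IC, and $E_\pi[vx]=\alpha E_{\tilde\pi}[\hat v]>0$. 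The feasible set of \eqref{eq:constrOptTransp} is a nonempty compact polytope (it contains $\pi_\ell\pi_r$), so the maximum is attained; therefore a profitable mechanism exists iff that maximum is strictly positive. The independence case is then immediate: when $\pi=\pi_\ell\pi_r$ we have $\hat v=v$ and $\pi(\theta_i\mid\theta_{-i})=\pi_i(\theta_i)$ does not depend on $\theta_{-i}$, so every $\tilde\pi\in\Pi(\pi_\ell,\pi_r)$ is orthogonal to $\pi$ and the orthogonality constraint disappears.

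I expect the only non-mechanical step to be the displayed identity, which is what makes ``orthogonality of $\tilde\pi$ to $\pi$'' (a statement about covariances of posterior beliefs) and ``IC-condition (2) for $\tilde\pi/\pi_\ell\pi_r$'' (a statement about conditional expectations of a mechanism) the same thing. Once that matching between $\tilde\pi(\theta_i'\mid\cdot)$ and $\pi_i(\theta_i')\,y(\theta_i',\cdot)$ is in place, the two directions are mirror images of one another and the remaining verifications ($x\in[0,1]$, correct marginals, total mass one, attainment of the maximum) are elementary bookkeeping.
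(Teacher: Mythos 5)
Your proposal is correct and follows essentially the same route as the paper's proof: you reparametrize the mechanism as a scaled coupling $\tilde\pi \propto \pi_\ell\pi_r\, x$, translate the two conditions of Lemma~\ref{la:altIC} into the equal-marginals and orthogonality constraints (your displayed covariance identity is exactly the computation at the end of the paper's proof), and handle the $[0,1]$ feasibility constraint by scaling. The only cosmetic difference is that you phrase it as a two-directional construction rather than as a step-by-step reformulation of the principal's optimization problem.
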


To explain how the constrained optimal transport problem in Proposition \ref{prop:transportation} is related to the principal's problem let $x$ be some mechanism. Together with $\pi$, $x$ induces a density $g(\theta) = \pi(\theta)x(\theta) = \frac{\pi(\theta)}{\pi_{\ell}(\theta_{\ell}\pi_r(\theta_r)}f(\theta)$ of a measure on $\Theta$ whose ``correlation structure'' depends on an exogenous part $\frac{\pi}{\pi_{\ell}\pi_r}$ and an endogenous part $f$. Instead of in terms of mechanisms, the principal's problem can also be phrased in terms of $f$. Requiring ex-ante indifference for the agents then translates into requiring that $f$ should be a nonnegative multiple of some probability distribution $\tilde\pi\in\Pi$ with the same marginals as $\pi$: $f=q\tilde\pi$ ($q\in[0,1]$). Uninformativeness translates into $\tilde\pi$ being orthogonal to $\pi$. Under this reparametrization the principal's objective becomes $qE_{\tilde\pi}[\hat{v}(\theta)]$ and the (upper) feasibility constraint on the mechanism becomes a correlation constraint: $q\frac{\tilde\pi}{\pi_{\ell}\pi_{r}} \le 1$. If there is a profitable $\tilde\pi$ then the principal therefore faces a tradeoff between up-scaling her objective (by increasing $q$) and the ability to concentrate more mass on type profiles with a positive objective value (by decreasing $q$). The mere existence of a profitable $\tilde\pi$ does not depend on the correlation constraint, however, and after dropping this constraint and dividing everything by $q$ we arrive at the formulation in Proposition \ref{prop:transportation}.

The proof of the above proposition (in the appendix) yields another characterization of incentive compatible mechanisms when types are independent.  
\begin{corollary} \label{cor:extremeRays}
    If types are independent then a mechanism $x$ is IC if and only if there exist nonnegative coefficients $\{\gamma_j\}_{j=1}^k$ ($k\ge0$) and extreme points\footnote{Recall that an element of a convex set is an extreme point of the set if is is not the midpoint of a line-segment connecting two distinct points in the set. For a characterization of the extreme points of $\Pi(\pi_{\ell},\pi_r)$ see \cite{Brualdi.2006}, Theorem 8.1.2.} $\pi^j$ of $\Pi(\pi_{\ell}, \pi_r)$ such that
    \begin{equation*}
        x = \sum_{j=1}^k \frac{\pi^j}{\pi_{\ell}\pi_r}\gamma_j.
    \end{equation*}
\end{corollary}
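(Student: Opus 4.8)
I would pass from the mechanism $x$ to the nonnegative array $g(\theta)=\pi_\ell(\theta_\ell)\pi_r(\theta_r)x(\theta)$ and exploit that incentive compatibility under the independent distribution forces both marginals of $g$ to be proportional to $\pi_\ell$ and $\pi_r$. After rescaling, $g$ then lands in the transportation polytope $\Pi(\pi_\ell,\pi_r)$, and decomposing it into extreme points of that polytope delivers precisely the claimed representation of $x$.

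For sufficiency, suppose $x=\sum_{j=1}^k\frac{\pi^j}{\pi_\ell\pi_r}\gamma_j$ with $\gamma_j\ge 0$ and $\pi^j$ extreme points of $\Pi(\pi_\ell,\pi_r)$. A one-line computation, using that each $\pi^j$ has $i$-marginal $\pi_i$, gives
\[
E_{\pi_\ell\pi_r}[x(\theta_i',\btheta_{-i})]
=\sum_j\frac{\gamma_j}{\pi_i(\theta_i')}\sum_{\theta_{-i}}\pi^j(\theta_i',\theta_{-i})
=\sum_j\gamma_j
\]
for every agent $i$ and report $\theta_i'$. Since this is independent of $\theta_i'$, condition 1 of Lemma \ref{la:altIC} holds, condition 2 is automatic under independence, and so $x$ is IC. For necessity, let $x$ be IC under $\pi_\ell\pi_r$ and set $\bar x=E_{\pi_\ell\pi_r}[x(\btheta)]$; by Proposition \ref{prop:main}, $E_{\pi_\ell\pi_r}[x(\theta_i',\btheta_{-i})]=\bar x$ for all $\theta_i'$ and all $i$. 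Then $g=\pi_\ell\pi_r\,x\ge 0$ has marginals $\bar x\pi_\ell$ and $\bar x\pi_r$ (sum out one coordinate and use this identity). If $\bar x=0$ the marginals vanish, hence $g\equiv 0$, hence $x\equiv 0$ since the marginals are positive, and the empty sum ($k=0$) works. If $\bar x>0$, then $g/\bar x\in\Pi(\pi_\ell,\pi_r)$, and since this polytope is nonempty and compact, Minkowski's theorem (a compact convex polytope is the convex hull of its finitely many extreme points) expresses it as $\sum_j\beta_j\pi^j$ with $\beta_j\ge 0$, $\sum_j\beta_j=1$ and $\pi^j$ extreme; dividing by $\pi_\ell\pi_r$ and putting $\gamma_j=\bar x\beta_j\ge 0$ yields $x=\sum_j\frac{\pi^j}{\pi_\ell\pi_r}\gamma_j$.

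The main thing to watch is not a deep obstacle but a few bookkeeping points: the degenerate case $\bar x=0$; the fact that $\Pi(\pi_\ell,\pi_r)$ fixes the marginals while permitting zero cells, so that $g/\bar x$ really is a member of it even on profiles where $x$ vanishes; and the observation that the representation by itself does not force $x\le 1$, since an extreme point can have $\pi^j/\pi_\ell\pi_r>1$ somewhere — which is exactly why the equivalence is stated within the class of mechanisms rather than for arbitrary nonnegative functions.
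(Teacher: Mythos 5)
Your proof is correct and takes essentially the same route as the paper: rescale $x$ by $\pi_\ell\pi_r$ so that incentive compatibility under independence pins the marginals, land in (a nonnegative multiple of) the transportation polytope $\Pi(\pi_\ell,\pi_r)$, and decompose into extreme points via Minkowski's theorem. The only difference is cosmetic --- the paper imports the marginal-proportionality step from the proof of Proposition \ref{prop:transportation}, while you verify it directly from Proposition \ref{prop:main} and Lemma \ref{la:altIC} and treat the degenerate case $\bar x=0$ explicitly.
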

Consider an example where both agents have the same number of types (without loss $\Theta_{\ell}=\Theta_r$) and where marginals are uniform. Together with the Birkhoff-von Neumann Theorem the characterization then implies that a mechanism is IC if and only if it can be decomposed into mechanisms where, up to relabeling of the types, the principal chooses $L$ if and only if both agents make the same report. This illustrates how incentive-compatibility is fundamentally based on the inability (and unwillingness) of the agents to coordinate.
\begin{example}
Assume $\Theta_{\ell}=\Theta_r=\{1,\dots,m\}$. A mechanism $x$ is said to be a match-your opponent mechanism if there exists a matching\footnote{A matching is a bijective function.} $m: \Theta_{\ell}\to\Theta_r$ such that
\begin{equation*}
    x(\theta_{\ell},\theta_r) 
    = 
    \begin{cases}
    1, & \mbox{ if } \theta_r = m(\theta_{\ell})\\
    0, & \mbox{ otherwise }
    \end{cases}
\end{equation*}
Assume that types are independent with $\pi_i=\frac{1}{N}$. Using the Birkhoff-von Neumann Theorem and Corollary \ref{cor:extremeRays}, a mechanism $x$ is IC if and only if there exist match-your-opponent mechanisms $x^j$ and nonnegative coefficients $\gamma_j$ such that 
\begin{equation*}
        x = \sum_{j=1}^k x^j\gamma_j.
\end{equation*}
Thus, a profitable mechanism exists if and only if there exists a profitable match-your-opponent mechanism. If the principal's objective is supermodular it follows\footnote{See \cite{Becker.1973, Vince.1990, Hardy.1952}.} that a profitable mechanism exists if and only if 
\begin{equation*}
    \sum_{t=1}^m v(t, t) > 0.
\end{equation*}
Indeed, as long as types are independent and agents are symmetric (i.e. $\pi_\ell(t)=\pi_r(t)$, $t=1,\dots,m$), it can be shown that a profitable mechanism exists if and only if $\sum_{t=1}^m \pi_\ell(t) v(t,t) >0$.\footnote{See \cite{hoffman1963simple}.}
\end{example}

\section{Allocation with more than two agents and disposal}
One application of our results is the problem of allocating a single nondisposable good between two agents. In this section, we extend our setting and study the problem of allocating a (potentially disposable) good among $n$ agents $i=1,\dots,n$.

Agents are again expected utility maximizers and enjoy utility $1$ from receiving the good and $0$ otherwise.\footnote{All results apply unchanged if agents receive utility $\bar u_i(\theta_i)>0$ from getting the good and 0 otherwise.} Every agent has a private type $\theta_i\in\Theta_i$.
The set of type profiles $\Theta=\Pi_i \Theta_i$ is finite. Throughout this section we assume that types are independent; the joint type distribution is denoted by $\pi(\theta_1,\dots,\theta_n)=\pi_1(\theta_1)\dots\pi_n(\theta_n)$\footnote{As before, we assume without loss of generality that $\pi_i>0$, $i=1,\dots,n$.}.

The principal's value from allocating the good to agent $i$ can depend on the types $\theta=(\theta_1,\dots,\theta_n)$ of all agents and is denoted by $v_i(\theta)\in \mathbb R$.
A (direct) mechanism specifies for each agent $i$ and every profile $\theta$ the probability of allocating the good to this agent when the report profile is $\theta$. 

We distinguish between the case where the principal can commit to dispose the good from the case where she is forced to allocate to one of the agents.\footnote{An alternative interpretation of disposal is that the principal allocates the good to herself.} We normalize the principal's utility from disposing the good to $0$. If the principal must allocate the good the feasibility constraint reads: $\sum_{i=1}^n x_i(\theta)= 1$; under free disposal it reads: $\sum_{i=1}^n x_i(\theta)\le 1$. In either case, the principal's problem is to find a feasible, incentive compatible mechanism that maximizes $E[\sum_{i=1}^nv_i(\btheta)x_i(\btheta)]$. As before, we will be interested in whether the principal can do better than choosing her ex-ante preferred option. 

We first characterize the set of incentive compatible mechanisms. Whether or not disposal is possible, a mechanism is incentive compatible if and only if each agent's interim probability of obtaining the good does not depend on his report:
\begin{lemma}
Assume there are $n$ agents with independent types and let $x$ be a mechanism (with or without disposal). Then $x$ is incentive compatible if and only if
\begin{equation*}
    E[x_i(\theta_i, \btheta_{-i})] = E[x_i(\btheta)] \quad \forall i\, \forall \theta_i.
\end{equation*}
\end{lemma}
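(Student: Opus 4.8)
The plan is to mimic the two-player argument but adapt it to the $n$-agent independent-types setting, where the auxiliary-game machinery simplifies. First I would note that under independence, $E[x_i(\theta_i',\btheta_{-i})\mid\theta_i]=E[x_i(\theta_i',\btheta_{-i})]$ for every $\theta_i$, since $\theta_i$ is independent of $\btheta_{-i}$ and $x_i(\theta_i',\cdot)$ only depends on the latter. Hence the interim allocation probability of agent $i$ from reporting $\theta_i'$ is a constant $p_i(\theta_i')$ that does not depend on the true type $\theta_i$. With this observation, the ``if'' direction is immediate: the stated condition says $p_i(\theta_i)=E[x_i(\btheta)]=:\bar p_i$ for all $\theta_i$, so every report yields agent $i$ the same interim probability $\bar p_i$ of getting the good, and truth-telling is (weakly) optimal. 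This argument is identical whether or not disposal is allowed, since feasibility of $x$ is not used here.

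For the ``only if'' direction, suppose $x$ is incentive compatible. The IC constraints for agent $i$ say $p_i(\theta_i)\ge p_i(\theta_i')$ for all $\theta_i,\theta_i'$ (truth-telling beats any misreport), but since $p_i(\theta_i')$ does not depend on $\theta_i$, applying this with the roles of $\theta_i$ and $\theta_i'$ swapped gives the reverse inequality, so $p_i(\theta_i)=p_i(\theta_i')$ for all $\theta_i,\theta_i'$; call this common value $\bar p_i$. It remains to identify $\bar p_i$ with $E[x_i(\btheta)]$, which follows by taking expectations over $\theta_i\sim\pi_i$: $E[x_i(\btheta)]=\sum_{\theta_i}\pi_i(\theta_i)p_i(\theta_i)=\bar p_i$. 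Combining, $E[x_i(\theta_i,\btheta_{-i})]=p_i(\theta_i)=\bar p_i=E[x_i(\btheta)]$ for every $\theta_i$, and the same for every report $\theta_i'$ in place of $\theta_i$, which is the claimed characterization.

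There is essentially no main obstacle here: unlike the two-agent proof of Proposition~\ref{prop:main}, independence removes any need to invoke the zero-sum game or a minimax value, because each agent's misreporting payoff is automatically type-independent. The only point requiring a word of care is that the argument must be run separately for each agent $i$, and that the feasibility constraint (with or without disposal) plays no role --- so the single proof covers both cases simultaneously, as the statement asserts. One could alternatively phrase this as a special case of Lemma~\ref{la:altIC} (condition~2 is vacuous under independence, condition~1 combined with the definition of $\bar p_i$ gives the result), but the direct argument above is self-contained and shorter.
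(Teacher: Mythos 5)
Your proposal is correct and follows essentially the same argument as the paper: under independence the interim allocation probability from any report is type-independent, so the two IC inequalities for types $\theta_i$ and $\theta_i'$ pin down a common value, which averaging over $\pi_i$ identifies with $E[x_i(\btheta)]$, and the converse is immediate since all reports then yield the same interim probability. The paper's proof is the same, merely less explicit about the role of independence and about feasibility being irrelevant.
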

Let $\bar v = \max_i E[v_i(\btheta)]$ be the principal's expected payoff from allocating to her ex-ante preferred agent.

An incentive compatible mechanism is profitable if it yields the principal strictly more than choosing her ex-ante preferred option (ignoring type reports). Formally, when there is free disposal, an incentive compatible mechanism is profitable if $\sum_i E[v_i(\btheta)x_i(\btheta)]>\max\{0,\bar v\}$. Without disposal it is profitable if $\sum_i E[v_i(\btheta)x_i(\btheta)]>\bar v$.

The following proposition generalizes the scope of Proposition \ref{prop:piAdd} to the $n$ agent case under independence.
\begin{proposition}\label{prop:n_indep_nodisp}
Assume there are $n$ agents with independent types and that the principal has to allocate to some agent. If the principal is unbiased then a profitable mechanism exists if and only if there do not exist functions $u_1(\theta_1),\dots,u_n(\theta_n)$ such that 
\begin{equation}\label{eq:simon}
    v_i(\theta) - v_j(\theta) = u_i(\theta_i)-u_j(\theta_j) \quad \forall i, j\, \forall \theta.
\end{equation}
\end{proposition}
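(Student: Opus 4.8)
The plan is to adapt the projection argument behind Proposition~\ref{prop:piAdd} to the $n$-agent allocation problem. First I would record that \eqref{eq:simon} is equivalent to the existence of functions $u_1,\dots,u_n$ together with a single function $w\colon\Theta\to\mathbb{R}$ such that $v_i(\theta)=u_i(\theta_i)+w(\theta)$ for all $i$ and $\theta$ (fix $j=1$ and put $w:=v_1-u_1$). Because $\sum_i x_i\equiv1$, replacing every $v_i$ by $v_i-c$ shifts the objective by the constant $-c$ and alters neither \eqref{eq:simon} nor profitability; using that the principal is unbiased, I would normalize the common value $E[v_i(\btheta)]$ to $0$.

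For the ``only if'' direction I would argue directly. Suppose $v_i(\theta)=u_i(\theta_i)+w(\theta)$ and let $x$ be any feasible, incentive compatible mechanism. By the preceding Lemma, $E[x_i(\theta_i,\btheta_{-i})]=E[x_i(\btheta)]=:\bar x_i$ is a constant, and $\sum_i\bar x_i=E[\sum_i x_i(\btheta)]=1$ with $\bar x_i\ge0$. Then, using $\sum_i x_i\equiv1$ and independence,
\[
\sum_i E[v_i(\btheta)x_i(\btheta)]=\sum_i\bar x_i\,E[u_i(\btheta_i)]+E[w(\btheta)]\le\max_i E[u_i(\btheta_i)]+E[w(\btheta)]=\max_i E[v_i(\btheta)]=\bar v .
\]
Since the constant mechanism that always allocates to an ex-ante best agent attains $\bar v$, no incentive compatible mechanism is profitable. (Note this direction does not use unbiasedness.)

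For the ``if'' direction I would run a perturbation argument around the uniform mechanism $x^0_i\equiv1/n$. Let $V$ be the linear space of $\delta=(\delta_i)\colon\Theta\to\mathbb{R}^n$ with (a) $\sum_i\delta_i(\theta)=0$ for all $\theta$, and (b) $\sum_{\theta_{-i}}\pi_{-i}(\theta_{-i})\,\delta_i(\theta_i,\theta_{-i})=0$ for all $i$ and all $\theta_i\in\Theta_i$. For $\delta\in V$ and $|\varepsilon|$ small, the mechanism $x^0+\varepsilon\delta$ is nonnegative, satisfies $\sum_i x_i\equiv1$, and has interim allocations constantly equal to $1/n$, hence is feasible and incentive compatible by the Lemma; since $E[v_i(\btheta)]=0$, its payoff equals $\varepsilon\sum_i E[v_i(\btheta)\delta_i(\btheta)]$. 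Choosing the sign of $\varepsilon$ suitably, it thus suffices to find $\delta\in V$ with $\sum_i E[v_i(\btheta)\delta_i(\btheta)]\ne0$, equivalently to show $v\notin V^{\perp}$ for the inner product $\langle f,g\rangle=\sum_i\sum_\theta\pi(\theta)f_i(\theta)g_i(\theta)$ on $(\mathbb{R}^n)^{\Theta}$. As $V$ is the common kernel of the linear functionals in (a) and (b), $V^{\perp}$ is the span of their gradients in this inner product. A direct computation (using $\pi=\prod_i\pi_i$ and $\pi_i>0$) would give: the gradient associated with the (a)-constraint at a profile $\theta^0$ is the function whose every coordinate at $\theta$ equals $\mathbf{1}_{\{\theta=\theta^0\}}/\pi(\theta^0)$; the gradient associated with the (b)-constraint for agent $i$ at a type $s$ is the function whose $j$-th coordinate at $\theta$ equals $\mathbf{1}_{\{j=i\}}\,\mathbf{1}_{\{\theta_i=s\}}/\pi_i(s)$. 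Linear combinations of these are exactly the $v$ of the form $v_j(\theta)=u_j(\theta_j)+w(\theta)$, so $V^{\perp}$ is precisely this set; since by hypothesis $v$ is not of this form, $v\notin V^{\perp}$ and a profitable mechanism exists.

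The step I expect to be the main obstacle is the identification $V^{\perp}=\{\,v:v_j(\theta)=u_j(\theta_j)+w(\theta)\,\}$ --- i.e., the claim that the generalized-additive objectives are exactly those for which every feasibility- and incentive-preserving perturbation of the uniform mechanism is payoff-neutral. This is a finite-dimensional duality computation, but it requires care with the $\pi$-weighting and with the interaction between the feasibility constraint (which couples the $n$ coordinates and produces the shared term $w$) and the incentive constraints (which act coordinate-wise and produce the terms $u_j(\theta_j)$). An equivalent and perhaps cleaner route would be to note that $\sup_{\delta\in V}\sum_i E[v_i(\btheta)\delta_i(\btheta)]$ is a linear program over a linear subspace, hence equals $0$ or $+\infty$, and equals $0$ iff $v\in V^{\perp}$; alternatively, one can apply the orthogonal-projection argument of Proposition~\ref{prop:piAdd} essentially verbatim, after observing that the principal's payoff depends only on the component of $v$ orthogonal to $V^{\perp}$.
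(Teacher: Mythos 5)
Your proposal is correct, and both directions are sound; the route differs from the paper's mainly in parametrization and in the construction used for sufficiency. Your ``only if'' step is essentially the paper's computation, rewritten via the equivalent representation $v_i(\theta)=u_i(\theta_i)+w(\theta)$ instead of the paper's differences $v_i-v_n$; both rest on the $n$-agent lemma that IC is equivalent to report-independent interim allocation probabilities, and both yield the stronger observation (noted in the paper) that a violation of \eqref{eq:simon} is necessary even without unbiasedness. For sufficiency the paper projects $\tilde v_i=\pi\,(v_i-v_n)$ onto the subspace $W$ of $\pi$-weighted additive differences and builds the mechanism explicitly from the shifted, rescaled residual, with agent $n$ absorbing the slack; you instead perturb the uniform mechanism $x^0_i\equiv 1/n$ inside the subspace $V$ of feasibility- and IC-preserving directions and show $V^\perp$ is exactly the set of generalized-additive objectives --- a dual formulation of the same finite-dimensional orthogonality argument, and the gradient computation you sketch does deliver that identification (the pointwise-sum constraint generates the common term $w$, the interim constraints generate the $u_j$'s), so the step you flag as the main obstacle is not a real obstacle. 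Your version buys symmetry (no agent is singled out, and the profitable mechanism is an arbitrarily small tilt of the uniform rule), while the paper's version buys an explicit closed-form mechanism from the projection residual without needing to pass through a perturbation space. Unbiasedness enters at the analogous point in each: the paper uses it to kill the $\underline{\hat\varepsilon}$ cross-term in the payoff, you use it (after the harmless normalization, which is legitimate because $\sum_i x_i\equiv 1$ shifts every feasible payoff and the benchmark $\bar v$ by the same constant) to reduce the benchmark to $0$ so that only the first-order term of the perturbation matters.
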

In the proof (in the appendix) we show that a violation of condition \eqref{eq:simon} remains a necessary condition for the existence of a profitable mechanism when the principal is not unbiased. The above proposition also allows us to state a necessary and sufficient condition for the existence of a profitable mechanism when the principal is allowed to discard the good:

\begin{proposition}\label{prop:n_indep_freedisp}
Assume there are $n$ agents with independent types and that the principal does not have to allocate the good to the agents. If the principal is unbiased and $\bar v=0$     then a profitable mechanism exists if and only there is an agent $j$ such that $v_j(\theta_j, \theta_{-j})$ is not constant in $\theta_{-j}$.
\end{proposition}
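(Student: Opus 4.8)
The plan is to prove both implications directly: the ``only if'' direction by reducing any incentive compatible mechanism to a constant one, and the ``if'' direction by exhibiting a profitable mechanism that allocates to a single, suitably chosen agent. Throughout I would use two facts. First, since the principal is unbiased and $\bar v=0$, we have $E[v_i(\btheta)]=0$ for every $i$. Second, by the incentive-compatibility characterization above, a mechanism $x$ (with or without disposal) is IC if and only if each agent's interim probability of receiving the good, $E[x_i(\theta_i,\btheta_{-i})]$, does not depend on $\theta_i$.

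\emph{Necessity.} I would argue the contrapositive. Suppose $v_i(\theta_i,\theta_{-i})$ is constant in $\theta_{-i}$ for every $i$, and write $v_i(\theta)=v_i(\theta_i)$. Fix any IC mechanism $x$. By the IC characterization and independence, $E[x_i(\btheta)\mid\theta_i]=E[x_i(\theta_i,\btheta_{-i})]=E[x_i(\btheta)]$ for all $\theta_i$, hence $E[v_i(\btheta)x_i(\btheta)]=E[v_i(\btheta_i)\,E[x_i(\btheta)\mid\btheta_i]]=E[v_i(\btheta_i)]\,E[x_i(\btheta)]=0$, where $E[v_i(\btheta_i)]=E[v_i(\btheta)]=0$. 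Summing over $i$ gives $\sum_i E[v_i(\btheta)x_i(\btheta)]=0=\max\{0,\bar v\}$, so no mechanism is profitable. (This direction in fact does not use unbiasedness: without it the same computation bounds the value by $\max\{0,\max_i E[v_i(\btheta)]\}=\max\{0,\bar v\}$, which is exactly the remark the paper makes about Proposition~\ref{prop:n_indep_nodisp}.)

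\emph{Sufficiency.} Suppose agent $j$ has $v_j$ non-constant in $\theta_{-j}$. I would split $v_j$ into its interim mean and a centered remainder: set $\bar v_j(\theta_j)=E[v_j(\theta_j,\btheta_{-j})]$ and $\tilde v_j(\theta)=v_j(\theta)-\bar v_j(\theta_j)$, so that (i)~$E[\tilde v_j(\theta_j,\btheta_{-j})]=0$ for every $\theta_j$, and (ii)~$\tilde v_j\not\equiv 0$ (otherwise $v_j$ would be constant in $\theta_{-j}$). Then I would take the mechanism with $x_i\equiv 0$ for $i\ne j$ and $x_j(\theta)=\tfrac12+\varepsilon\,\tilde v_j(\theta)$, where $\varepsilon=\bigl(2\max_\theta|\tilde v_j(\theta)|\bigr)^{-1}>0$. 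This is feasible under free disposal since $x_j(\theta)\in[0,1]$ and $\sum_i x_i(\theta)=x_j(\theta)\le 1$; it is IC since every agent's interim probability is constant in his report (identically $0$ for $i\ne j$, and identically $\tfrac12$ for $j$ by (i)). Its value is $E[v_j(\btheta)x_j(\btheta)]=\tfrac12 E[v_j(\btheta)]+\varepsilon E[v_j(\btheta)\tilde v_j(\btheta)]$; the first term vanishes by unbiasedness, and expanding $v_j=\tilde v_j+\bar v_j$ and using (i) with independence to kill the cross term $E[\bar v_j(\btheta_j)\tilde v_j(\btheta)]=0$ leaves $E[v_j(\btheta)\tilde v_j(\btheta)]=E[\tilde v_j(\btheta)^2]>0$, strict because $\tilde v_j\not\equiv 0$ and every type profile has positive probability. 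Hence the value is $\varepsilon E[\tilde v_j(\btheta)^2]>0=\max\{0,\bar v\}$, so the mechanism is profitable.

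\emph{Main obstacle.} Necessity is routine once the IC characterization is invoked. The content is in sufficiency, and the crucial point — which is also where unbiasedness cannot be dropped — is to use a single agent $j$ whose value responds to other agents' reports and to perturb the trivial constant mechanism by the mean-zero component $\tilde v_j$ of $v_j$: this keeps interim allocation probabilities constant (hence preserves IC) while making the allocation positively correlated with $v_j$. With unbiasedness the term $\tfrac12 E[v_j(\btheta)]$ disappears; without it the analogous claim is false (one cannot in general profit off an agent whose expected value lies strictly below the disposal value), so the only delicate step is bookkeeping the signs.
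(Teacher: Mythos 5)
Your proof is correct, and it takes a genuinely different route from the paper's. The paper proves this proposition by reduction: it adds a dummy agent $0$ with a singleton type space and $v_0\equiv 0$, interprets disposal as allocating to that agent, and invokes Proposition \ref{prop:n_indep_nodisp}; condition \eqref{eq:simon} for the augmented problem then collapses to ``each $v_i$ depends only on $\theta_i$,'' so the sufficiency direction is ultimately powered by the orthogonal-projection construction inside the proof of Proposition \ref{prop:n_indep_nodisp}. You instead argue both directions from scratch: necessity is a two-line computation from the IC characterization plus $E[v_i(\btheta)]=0$ (and your aside that this direction survives without unbiasedness, with the bound $\max\{0,\bar v\}$, matches the paper's analogous remark for Proposition \ref{prop:n_indep_nodisp}); sufficiency is an explicit construction, allocating only to the agent $j$ whose value is interdependent, with $x_j=\tfrac12+\varepsilon\,\tilde v_j$ where $\tilde v_j$ is $v_j$ centered at its interim means in $\theta_j$ — this is exactly a one-agent instance of the projection idea, since $\tilde v_j$ is the residual of $v_j$ after projecting onto functions of $\theta_j$ alone, but it avoids the general machinery and the detour through the no-disposal problem. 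What each approach buys: the paper's reduction is economical given Proposition \ref{prop:n_indep_nodisp} and makes the logical relationship between the two results transparent; yours is self-contained, yields a concrete profitable mechanism (constant $\tfrac12$ perturbed by the mean-zero interdependent component, all other agents excluded), and makes visible exactly where unbiasedness and $\bar v=0$ enter. All steps check out: feasibility under free disposal, IC via constant interim probabilities ($0$ for $i\neq j$, $\tfrac12$ for $j$), the cross term $E[\bar v_j(\btheta_j)\tilde v_j(\btheta)]=0$ by independence, and strict positivity of $E[\tilde v_j(\btheta)^2]$ from full support of the independent distribution with positive marginals.
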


\section{Related Literature} \label{sec:Lit}
Our main setting can be interpreted as an allocation problem without disposal. It therefore relates to \cite{myerson1981optimal} who characterizes the set of IC-mechanisms with transfers under independence. \cite{Cremer.1985, Cremer.1988} show that with transfers, full rank correlation makes any allocation rule implementable.

\cite{borgers2009efficient} study a setting without transfers and two agents with opposed interests. Their setting has a third option that acts as a compromise and types are iid. They consider utilitarian welfare and study second-best rules using numerical tools. Since utilitarian welfare is additive, our results underline the importance of the  compromise option for their results. \cite{kim2017ordinal} considers a related setting with at least three ex-ante symmetric alternatives and several agents with iid private values whose interests are not necessarily opposed.

\cite{feng2019getting} ask in a setting without transfer with a perfect conflict of interests not between the agents but between the agents and the principal if the later can do better than choosing her ex-ante preferred option. \cite{goldlucke2020multiple} study ``threshold mechanisms'' with binary message spaces to assign an unpleasant task without transfers in a setting with symmetric agents with iid types.

The proof of Proposition \ref{prop:main} connects implementation of mechanisms with the properties of correlated equilibrium \cite{Aumann1974subjectivity,Aumann1987correlated} in zero sum games \cite{Rosenthal1974correlated}.

Our comparative statics result for the set of IC mechanisms with respect to the spanning preorder relates to \citeauthor{blackwell1953equivalent}'s (\citeyear{blackwell1953equivalent,blackwell1951comparison}) comparison of experiments (see also \cite{borgers2013signals}) and \citeauthor{bergemann2016bayes}'s (\citeyear{bergemann2016bayes}) comparison of information structures in games. 
A difference is that they compare signals which are informative about a payoff-relevant state while in our setting the signals themselves are payoff-relevant.

Since the set of DIC mechanisms in our setting coincides with the set of constant mechanisms, the existence of non-constant IC mechanisms is related to BIC-DIC equivalence \cite{gershkov2013equivalence, manelli2010bayesian}. In our setting, IC under correlation can be viewed as a mid-way point between IC under independence and DIC.

\appendix
\section{Appendix}

\subsection{Proof of Lemma \ref{la:altIC}}
\begin{proof}
Let $\pi \in \Pi$ and let $x$ be IC. By Proposition \ref{prop:main}, agents must be ex-ante indifferent between reports and their type realizations must be uninformative. Conversely, suppose that $x$ satisfies the assumptions of Lemma \ref{la:altIC}. Ex-ante indifference combined with the law of iterated expectations implies that $E_{\pi}[x(\theta_i',\btheta_{-i})] = E_{\pi}[x(\btheta)]\, \forall i, \theta_i'$. Hence for any $i$ and $\theta_i, \theta_i'$:
\begin{align*}
    E_{\pi}[x(\theta_i',\btheta_{-i})|\theta_i] 
    &=
    E_{\pi}[x(\theta_i',\btheta_{-i})]\\
    &= 
    E_{\pi}[x(\btheta)],
\end{align*}
where the first equality follows from uninformativeness.
\end{proof}
\subsection{Proof of Proposition \ref{prop:indepAndFullRank}}
\begin{proof}
The first assertion is an immediate consequence of Proposition \ref{prop:spanning} and Example \ref{ex:spanIndep}. To see why the second assertion holds note that if $\tau^0,\tau^1,\dots,\tau^k \in  \Delta\Theta_{-i}$ and $y(\theta_{-i})\colon\Theta_{-i}\to\mathbb{R}$ is any function such that
\begin{equation*}
    \sum_{\theta_{-i}}\tau^j(\theta_{-i})y(\theta_{-i})=0 \quad \forall j=1,\dots,k.
\end{equation*}
then also $\sum_{\theta_{-i}}\tau^0(\theta_{-i})y(\theta_{-i})=0$ if $\tau^0$ is spanned by $\{\tau^1,\dots,\tau^k\}$.

Now assume without loss that $|\Theta_{\ell}|\ge |\Theta_r|$. By the full rank-condition the vectors $(\pi(\cdot|\theta_{\ell}))_{\theta_{\ell}}$ contain a basis of $\mathbb{R}^{\Theta_r}$. In particular, for any $\theta_r$, they span the belief $\mathbbm{1}_{\theta_r}$ which puts mass $1$ on  $\theta_r$. But then $\ell$'s IC constraints must be satisfied under that belief (consider the function $y_{\theta_{\ell}'}(\theta_{r})=x(\theta_{\ell}',\theta_{r})-\bar{x}$). But that means that 
\begin{equation*}
    x(\theta_{\ell}', \theta_r) = \bar x \quad \forall \theta_{\ell}'.
\end{equation*}
Since $\theta_r$ was arbitrary it follows that $x$ must be constant.

Next we will show that $\pi \in \Pi$ is maximal iff it has full rank and minimal iff it is an independent type distribution. We need to show that (i) $\pi$ has full rank iff it spans every $\tilde\pi$ that spans it and (ii) $\pi$ is an independent type distribution iff for every $\tilde \pi$ spanned by $\pi$ it is also the case that $\tilde\pi$ spans $\pi$. Throughout, assume without loss of generality that $|\Theta_\ell|\ge|\Theta_r|$.

First note that for $\pi,\tilde\pi\in\Pi$, $\pi$ spans $\tilde\pi$ if and only if the row space and the column space of $\tilde\pi$ are contained in the row space and the column space, respectively, of $\pi$.

Assume that $\pi\in\Pi$ has full rank and that $\tilde\pi\in\Pi$ spans $\pi$. Denote the column and row spaces of $\pi$ and $\tilde\pi$ by $V,W$ and $\tilde V, \tilde W$, respectively. Since $\tilde\pi$ spans $\pi$ it holds that $V\subset\tilde V$ and $W\subset\tilde W$. Since $\pi$ has full rank, $V$ and $W$ both have dimension $|\Theta_r|$. $\tilde\pi$ is an $|\Theta_\ell|\times|\Theta_r|$ matrix and so its column and row spaces cannot have a dimension larger than $|\Theta_r|$. Hence we must have $V=\tilde V$ and $W=\tilde W$. But that implies that $\pi$ also spans $\tilde \pi$. Thus $\pi$ is maximal. Conversely, assume that $\pi$ is maximal. Let $\tilde\pi \in \Pi$ be some full rank distribution that spans $\pi$. Since $\pi$ is maximal, $\pi$ must then also span $\tilde\pi$. Hence the row space (and also the column space) of $\pi$ must have dimension $|\Theta_r|$. This means that $\pi$ has full rank.

Now let $\pi$ be an independent distribution. Let $\tilde\pi$ be spanned by $\pi$. Then, for all $\theta_i$, $\tilde\pi(\cdot|\theta_i)$ is a linear combination of the vectors $\pi(\cdot|\tilde\theta_i)$ ($\tilde\theta_i\in\Theta_i$). But since types are independent under $\pi$, the latter vectors all coincide with $\pi_i(\cdot)$. Hence $\tilde\pi(\cdot|\theta_i)=\pi_i(\cdot)$ for all $\theta_i$, $i=\ell, r$. Thus $\tilde\pi=\pi$; in particular $\tilde\pi$ spans $\pi$. Hence $\pi$ is a minimal element. Now let $\tilde\pi$ be a minimal element. Let $\tilde\pi$ be the independent distribution with the same marginals as $\pi$. Then $\pi$ spans $\tilde\pi$ and since $\pi$ is minimal, $\tilde\pi$ must also span $\pi$. But $\tilde\pi$ has rank one and so $\pi$ must also have rank one. But that means that $\pi$ must be an independent type distribution (thus $\pi=\tilde\pi$). 
\end{proof}

\subsection{Proof of Proposition \ref{prop:piAdd}}
\begin{proof} First consider a general $\pi\in \Pi$.
Define $w(\theta)=v(\theta)\pi(\theta)$. If $E_{\pi}[v(\btheta)]=0$ and agents are truthful then the principal's payoff from some mechanism $x$ is 
\begin{align*}
    \sum_{\theta} v(\theta)\pi(\theta)x(\theta) 
    &= 
    E_{\pi}[v(\btheta)]E_{\pi}[x(\btheta)] + \sum_{\theta} v(\theta)\pi(\theta)(x(\theta)-E_{\pi}[x(\btheta)])\\
    &=
    \sum_{\theta} w(\theta)(x(\theta)-E_{\pi}[x(\btheta)]).
\end{align*}
By Proposition \ref{prop:main}, $x$ is IC if and only if 
\begin{equation} \label{eq:ic_zero}\tag{3}
    \sum_{\theta_{-i}} \pi(\theta_{-i}|\theta_i) (x(\theta_i', \theta_{-i})-E_{\pi}[x(\btheta)]) = 0 \quad \forall \theta_i, \theta_{i}'\,\forall i.
\end{equation}
First assume that there exist coefficients $\lambda_{\ell}(\theta_{\ell}, \tilde\theta_{\ell})$, $\lambda_r(\theta_r, \tilde\theta_r)\in \mathbb{R}$ such that 
\begin{equation} \label{eq:gen_additive}\tag{4}
    w(\theta) = \sum_{\tilde\theta_{\ell}}\lambda_{\ell}(\theta_{\ell}, \tilde\theta_{\ell}) \pi(\theta_r|\tilde\theta_{\ell}) + \sum_{\tilde\theta_r}\lambda_r(\theta_r, \tilde\theta_r) \pi(\theta_{\ell}|\tilde\theta_r) \quad \forall \theta_{\ell}, \theta_r.
\end{equation}
Then by \eqref{eq:ic_zero}, any IC mechanism satisfies $\sum_{\theta} w(\theta)(x(\theta)-E_{\pi}[x(\btheta)])=0$ and so there is no profitable mechanism.

Now assume instead that there exist no coefficients $\lambda_{\ell}$ and $\lambda_r$ such that $w$ satisfies \eqref{eq:gen_additive}. Let $U$ be the set of all $u \in \mathbb{R}^{\Theta_{\ell}\times\Theta_r}$ for which there exist coefficients $\lambda_{\ell}(\theta_{\ell}, \tilde\theta_{\ell})$, $\lambda_r(\theta_r, \tilde\theta_r)\in \mathbb{R}$ such that
\begin{equation*}
    u(\theta) = \sum_{\tilde\theta_{\ell}}\lambda_{\ell}(\theta_{\ell}, \tilde\theta_{\ell}) \pi(\theta_r|\tilde\theta_{\ell}) + \sum_{\tilde\theta_r}\lambda_r(\theta_r, \tilde\theta_r) \pi(\theta_{\ell}|\tilde\theta_r) \quad \forall \theta_{\ell}, \theta_r.
\end{equation*}
Note that $U$ is a linear subspace of $\mathbb{R}^{\Theta_{\ell}\times\Theta_r}$ (Indeed, $0\in U$ and $U$ is closed under addition and multiplication by scalars). Let $\hat{u}$ be the orthogonal projection of $w$ onto $U$ and let $\hat{w}$ be the orthogonal projection of $w$ onto the orthogonal complement of $U$ so that $w = \hat{w} + \hat{u}$. By assumption $w\not\in U$, and so $\hat{w}\ne 0$. 

As before, given any IC mechanism $x$ it holds that $\sum_{\theta} \hat{u}(\theta)(x(\theta)-E_{\pi}[x(\btheta)])=0$ and so the principal's payoff from any IC mechanism $x$ is 
\begin{equation*}
    \sum_{\theta} w(\theta)(x(\theta)-E_{\pi}[x(\btheta)])
    =
    \sum_{\theta} \hat{w}(\theta)(x(\theta)-E_{\pi}[x(\btheta)]).
\end{equation*}
We will now construct a profitable mechanism, i.e. an IC mechanism for which the latter expression is positive. Define
\begin{equation*}
    \hat{x}(\theta) = \varepsilon(\hat{w}(\theta) - \min\hat{w})
\end{equation*}
where $\varepsilon>0$ is sufficiently small such that $\hat{x}\le1$. First note that $\hat{x}$ is IC. Indeed, for any $\theta_{\ell}', \theta_{\ell}''$,
\begin{align*}
    \sum_{\theta_{r}} \pi(\theta_{r}|\theta_{\ell}'') \hat{x}(\theta_{\ell}', \theta_r)
    &=
    \varepsilon\sum_{\theta_{r}} \pi(\theta_{r}|\theta_{\ell}'') \hat{w}(\theta_{\ell}', \theta_r)
    - \varepsilon \min\hat{w}\\
    &= 
    \varepsilon\sum_{\theta_{\ell}}\sum_{\theta_r}
    \underbrace{\left(
    \sum_{\tilde\theta_{\ell}}
    \mathbbm{1}_{\theta_{\ell}=\theta_{\ell}', \tilde\theta_{\ell}=\theta_{\ell}''}\pi(\theta_{r}|\tilde\theta_{\ell})
    \right)}_{\in U}
    \hat{w}(\theta_{\ell}, \theta_r)
    - \varepsilon \min\hat{w}\\
    &=
    - \varepsilon \min\hat{w},
\end{align*}
where the last equality follows because $\hat{w}$ lies in the orthogonal complement of $U$. Hence $\hat{x}$ is IC for agent $\ell$. The proof that $\hat{x}$ is IC for agent $r$ is symmetric.

We now show that $\hat{x}$ is actually a profitable mechanism. First note that the above incentive-compatibility calculation implies that $E_{\pi}[\hat{x}(\btheta)|\theta_{\ell}]=- \varepsilon \min\hat{w}$ and in particular $E_{\pi}[x(\btheta)]=- \varepsilon \min\hat{w}$.
Thus the principal's payoff from $\hat{x}$ is
\begin{align*}
    \sum_{\theta} \hat{w}(\theta)(x(\theta)-E_{\pi}[x(\btheta)])
    &= 
    \varepsilon\sum_{\theta} \hat{w}(\theta)\hat{w}(\theta)\\
    &>0.
\end{align*}
Hence $\hat x$ is a profitable mechanism.

Finally, assume that types are independent. Note that 
\begin{equation*}
    v(\theta_{\ell}, \theta_r)\pi_{\ell}(\theta_{\ell})\pi_r(\theta_r) = \sum_{\tilde\theta_{\ell}}\lambda_{\ell}(\theta_{\ell}, \tilde\theta_{\ell}) \pi(\theta_r) + \sum_{\tilde\theta_r}\lambda_r(\theta_r, \tilde\theta_r) \pi(\theta_{\ell}) \quad \forall \theta_{\ell}, \theta_r
\end{equation*}
if and only if
\begin{equation*}
    v(\theta_{\ell}, \theta_r) = \underbrace{\sum_{\tilde\theta_{\ell}}\tilde\lambda_{\ell}(\theta_{\ell}, \tilde\theta_{\ell})}_{v_{\ell}(\theta_{\ell})}
    + 
    \underbrace{\sum_{\tilde\theta_r}\tilde\lambda_r(\theta_r, \tilde\theta_r)}_{v_r(\theta_r)},
\end{equation*}
where $\tilde\lambda_i(\theta_i, \tilde\theta_i) = \frac{\lambda_i(\theta_i, \tilde\theta_i)}{\pi_i(\theta_i)}$
and so the earlier condition reduces to additivity.
\end{proof}

\subsection{Proof of Proposition \ref{prop:transportation}}
\begin{proof}
By Lemma \ref{la:altIC}, the principal's problem can be written as
\begin{subequations}
\begin{alignat*}{3}
&\!\max &\qquad & \sum_{\theta}\hat v(\theta)\pi_{\ell}(\theta_{\ell})\pi_r(\theta_r)x(\theta)\notag{}&&\\
\tag{$I_i$}\label{eq:I_i}
&\text{s.t.}&       & \sum_{\theta_{-i}}\pi_{-i}(\theta_{-i})x(\theta_i',\theta_{-i}) 
=  
\sum_{\theta}\pi_{\ell}(\theta_{\ell})\pi_{r}(\theta_r)x(\theta_{\ell},\theta_r) &\qquad& \forall \theta_i'\,\forall i\\
\tag{$U_i$}\label{eq:U_i}
&           &       & \sum_{\theta_{-i}}\pi(\theta_{-i}|\theta_i)x(\theta_i',\theta_{-i}) 
= 
\sum_{\theta_{-i}}\pi_{-i}(\theta_{-i})x(\theta_i',\theta_{-i})&\qquad& \forall \theta_i, \theta_i' \,\forall i\\
\tag{$F$}\label{eq:F}&           &       & 0\le x(\theta)\le 1 &\qquad& \forall \theta.
\end{alignat*}
\end{subequations}
Here, \eqref{eq:I_i} are the ex-ante indifference constraints (or, equivalently, the IC constraints under the independent type distribution $\pi_{\ell}\pi_r$) and \eqref{eq:U_i} are the uninformativeness constraints. Now define 
\begin{equation*}
    f(\theta_{\ell}, \theta_r) = \pi_{\ell}(\theta_{\ell})\pi_r(\theta_r) x(\theta_{\ell}, \theta_r).
\end{equation*}
Using this substitution the principal's objective becomes $\sum_\theta v(\theta)f(\theta)$, the ex-ante indifference constraints become
\begin{equation} \label{eq:uninformf}
    \sum_{\theta_{-i}}f(\theta_i', \theta_r) 
    =  
    \pi_i(\theta_i')\sum_{\theta}f(\theta) \qquad \forall \theta_i'\,\forall i
\end{equation}
and the uninformativeness constraints can be written as
\begin{equation}
    \sum_{\theta_{-i}}(\pi(\theta_i|\theta_{-i})-\pi_{i}(\theta_{i}))f(\theta_i',\theta_{-i})
    = 
    0
    \qquad \forall \theta_i, \theta_i' \,\forall i
\end{equation}
while the feasibility constraints become 
\begin{equation*}
    0 \le f(\theta_{\ell}, \theta_r) \le \pi_{\ell}(\theta_{\ell})\pi_r(\theta_r) \qquad \forall \theta_{\ell}, \theta_r.
\end{equation*}
Equation \eqref{eq:uninformf} says that the marginals of $f$ are proportional to $\pi_i$. Since $f$ is nonzero, it is thus a nonnegative multiple of some joint probability distribution $\tilde\pi$ with marginals $\pi_i$. Hence the principal's problem can be written as
\begin{subequations}
\begin{alignat*}{3}
&\!\max_{q\in[0,1]}\max_{\tilde\pi \in \Pi(\pi_l, \pi_r)} &\qquad & q\sum_{\theta}\hat v(\theta)\tilde\pi(\theta)\notag{}&&\\
&\text{s.t.}&       & \sum_{\theta_{-i}}(\pi(\theta_i|\theta_{-i})-\pi_{i}(\theta_{i}))\tilde\pi(\theta_i',\theta_{-i})
    = 
    0
     &\qquad& \forall \theta_i, \theta_i' \,\forall i\\
&           &       & q\tilde\pi(\theta_{\ell}, \theta_r)\le \pi_l(\theta_{\ell})\pi_r(\theta_r) 
&\qquad& \forall \theta_{\ell}, \theta_r
\end{alignat*}
\end{subequations}
where $\Pi(\pi_l, \pi_r)$ is the set of joint type distributions with marginals $\pi_i$.
A profitable mechanism therefore exists if and only if the latter problem has a positive optimal value.

Since any $\tilde\pi\in\Pi(\pi_{\ell}, \pi_r)$ can be made to satisfy the constraint $q\tilde\pi(\theta_{\ell}, \theta_r)\le \pi_l(\theta_{\ell})\pi_r(\theta_r)$ after appropriate scaling, the problem's optimal value is positive if and only if the value of the relaxed problem in which that constraint is left out is positive. Finally, note that 
\begin{align*}
    \sum_{\theta_{-i}}(\pi(\theta_i|\theta_{-i})-\pi_{i}(\theta_{i}))\tilde\pi(\theta_i',\theta_{-i})
    &=
    \sum_{\theta_{-i}}(\pi(\theta_i|\theta_{-i})-\pi_{i}(\theta_{i}))\tilde\pi(\theta_i'|\theta_{-i})\pi_{-i}(\theta_{-i})\\
    &=
    \sum_{\theta_{-i}}(\pi(\theta_i|\theta_{-i})-\pi_{i}(\theta_{i}))(\tilde\pi(\theta_i'|\theta_{-i})-\pi_i(\theta_i'))\pi_{-i}(\theta_{-i})
\end{align*}
because $\sum_{\theta_{-i}}(\pi(\theta_i|\theta_{-i})-\pi_{i}(\theta_{i}))\pi_i(\theta_i')\pi_{-i}(\theta_{-i})=0.$ Therefore a profitable $\tilde \pi$ exists if and only if the optimal value of the following problem is positive:
\begin{subequations}
\begin{alignat*}{3}
&\!\max_{\tilde\pi \in \Pi(\pi_l, \pi_r)} &\qquad & 
\sum_{\theta}\hat v(\theta)\tilde\pi(\theta)\notag{}&&\\
&\text{s.t.}&       & \sum_{\theta_{-i}}(\pi(\theta_i|\theta_{-i})-\pi_{i}(\theta_{i}))(\tilde\pi(\theta_i'|\theta_{-i})-\pi_i(\theta_i'))\pi_{-i}(\theta_{-i})
    = 
    0
     &\qquad& \forall \theta_i, \theta_i' \,\forall i.
\end{alignat*}
\end{subequations}
This concludes the proof of Proposition \ref{prop:transportation}.
\end{proof}

\subsection{Proof of Corollary \ref{cor:extremeRays}}
\begin{proof}
The proof of Proposition \ref{prop:transportation} shows that under independence, a mechanism $x$ is IC if and only if there exists some $q\in[0,1]$ and $\tilde\pi\in\Pi(\pi_l,\pi_r)$ such that $\pi_l\pi_rx = q \tilde \pi$. The set $\Pi(\pi_l,\pi_r)$ is a polytope (known as the transportation polytope), hence by the Weyl-Minkowski Theorem it is the convex hull of its finitely many extreme points. This implies the claim.
\end{proof}

\subsection{Proof of Lemma}
Let $x=(x_1,\dots,x_n)$ be an incentive compatible mechanism. Let $i$ be an agent and let $\theta_i, \theta_i'\in\Theta_i$. In order for type $\theta_i$ to be truthful, it must hold that $E[x_i(\theta_i,\btheta_{-i})] \ge E[x_i(\theta_i',\btheta_{-i})]$. In order for type $\theta_i'$ to be truthful, $E[x_i(\theta_i,\btheta_{-i})] \le E[x_i(\theta_i',\btheta_{-i})]$ must hold. Hence in any incentive compatible mechanism $E[x_i(\theta_i,\btheta_{-i})]$ is constant in $\theta_i$, for all $i$. Since any mechanism satisfying the latter is also incentive compatible, the condition is equivalent to incentive compatibility. Finally, if $E[x_i(\theta_i,\btheta_{-i})]$ is constant in $\theta_i$ then it must equal $E[x_i(\btheta)]$.

\subsection{Proof of Proposition \ref{prop:n_indep_nodisp}}
\begin{proof}
First assume that \eqref{eq:simon} holds. It follows that there exist functions $u_i(\theta_i)$ such that $v_i(\theta)-v_n(\theta) = u_i(\theta_i)-u_n(\theta_n)$ for all $i$ and $\theta$. Recall that in an IC mechanism $x$, $\bar x_i := E[x_i(\theta_i,\btheta_{-i})]$ does not depend on $\theta_i$. The principal's payoff from an incentive compatible mechanism $x$ is therefore 
\begin{align*}
    \sum_iE[v_i(\btheta)x_i(\btheta)] 
    &= \sum_{i<n}E[(v_i(\btheta)-v_n(\btheta))x_i(\btheta)]
    +E[v_n(\btheta)\sum_ix_i(\btheta)]\\
    &= \sum_{i<n}E[(u_i(\btheta_i)-u_n(\btheta_n))x_i(\btheta)] + E[v_n(\btheta)]\\
    &= \sum_{i<n}E[u_i(\btheta_i)E[x_i(\btheta)|\btheta_i]]
    - E[u_n(\btheta_n)E[\sum_{i<n}x_i(\btheta)|\btheta_n]] 
    + E[v_n(\btheta)]\\
    &= \sum_{i<n}E[u_i(\btheta_i)]\bar x_i
    - E[u_n(\btheta_n)(1-\bar x_n)]
    + E[v_n(\btheta)]\\
    &= \sum_{i}E[u_i(\btheta_i)]\bar x_i + E[v_n(\btheta)-u_n(\btheta_n)]\\
    &= \sum_{i}E[v_i(\btheta)+u_n(\btheta_n)-v_n(\btheta)]\bar x_i + E[v_n(\btheta)-u_n(\btheta_n)]\\
    &= \sum_{i}E[v_i(\btheta)]\bar x_i.
\end{align*}
Hence, if \eqref{eq:simon} holds then the principal's expected payoff from an incentive compatible mechanism $x$ is the same as her expected payoff from the constant mechanism $y$ given by $y_i(\theta) \equiv \bar x_i$. In particular, the principal cannot do better than allocating to her ex-ante preferred agent and so no profitable mechanism exists. Note that we have not used the unbiasedness assumption and so the following is true even if the principal is not unbiased: A profitable mechanism can only exist if \eqref{eq:simon} is violated.

Now let the principal be unbiased. Assume that \eqref{eq:simon} is violated. Then there do no exist functions $u_i(\theta_i)$ ($i=1,\dots,n$) such that $v_i(\theta)-v_n(\theta) = u_i(\theta_i) - u_n(\theta_j)$ ($i<n$). (If such functions did exist then it would follow that for any $i, j$: $v_i(\theta)-v_j(\theta) = v_i(\theta) -v_n(\theta)-(v_j(\theta)-v_n(\theta)) = u_i(\theta_i)-u_n(\theta_n) - (u_j(\theta_j)-u_n(\theta_n)) = u_i(\theta_i) -u_j(\theta_j)$ and so \eqref{eq:simon} would hold). We will now construct a profitable mechanism.

Let $\Omega$ be the vector space of functions from $\{1,\dots,n-1\}\times\Theta$ to $\mathbb{R}$ and let $U_i$ be the set of functions from $\Theta_i$ to $\mathbb{R}$. Moreover, let $W \subset \Omega$ be the set of functions from $\{1,\dots,n-1\}\times\Theta$ to $\mathbb{R}$ for which there exist functions $u_i(\theta_i)$ with $w_i(\theta) = \pi(\theta)(u_i(\theta_i)-u_n(\theta_n))$ $\forall i<n\,\forall \theta$. Now consider the following minimization problem
\begin{align*}
    &\min_{u_1 \in U_1, \dots, u_n \in U_n} \sum_{i<n}\sum_\theta [\pi(\theta)(v_i(\theta)-v_n(\theta)) - \pi(\theta)(u_i(\theta)-u_n(\theta))]^2\\
    &=\min_{w \in W} \sum_{i<n}\sum_\theta [\tilde v_i(\theta) - w_i(\theta)]^2,
\end{align*}
where $\tilde v_i(\theta) = \pi(\theta)(v_i(\theta)-v_n(\theta))$. Note that $W$ is a linear subspace of $\Omega$ and hence the solution $\hat w$ to the above minimization problem is the orthogonal projection of $\tilde v \in \Omega$ onto $W$ (all spaces are finite-dimensional and so existence is not an issue). Let $\hat \varepsilon = \tilde v - \hat w$ be the projection residual. Note that the optimal value of the minimization problem is zero if and only if \eqref{eq:simon} holds. By assumption, \eqref{eq:simon} is violated and hence in particular $\hat\varepsilon$ must be nonzero. Moreover, since $\hat \varepsilon$ is orthogonal to $W$, for any $h \in W$ it must hold that 
\begin{equation*}
    \sum_{i<n}\sum_\theta h_i(\theta) \hat \varepsilon(\theta) = 0.
\end{equation*}
We will now use $\hat \varepsilon$ to construct a profitable mechanism. Let $\underline{\hat\varepsilon} = \min_{i<n, \theta}\hat\varepsilon_i(\theta)$ and let \begin{equation*}
\hat z_i(\theta) = \hat\varepsilon_i(\theta)-\underline{\hat\varepsilon} \quad \forall i<n\, \forall \theta.
\end{equation*}
By construction, $\hat z \in \Omega$ is nonnegative. Define
\begin{equation*}
    \hat x_i(\theta) = \alpha \hat z_i(\theta),
\end{equation*}
where $\alpha>0$ is chosen sufficiently small such that $\sum_{i<n}\hat x_i(\theta)\le 1$ for all $\theta$. Also, define $\hat x_n(\theta) = 1-\sum_{i<n}\hat x_i(\theta)$. Then $\hat x$ is a feasible mechanism.

In the remainder of the proof we show that $\hat x$ is a profitable mechanism. We first verify that $\hat x$ is IC. Let $j<n$ be an agent. Then for any report $\theta_j'$ it holds that
\begin{align*}
    \sum_{\theta_{-j}} \pi_{-j}(\theta_{-j}) \hat x_j(\theta_j',\theta_{-j}) 
    = 
    &\sum_{i<n}\sum_{\theta} \pi(\theta)\frac{1}{\pi_j(\theta_j)}1(\theta_j = \theta_j')1(i=j)\hat \varepsilon_i(\theta)
    - \alpha \underline{\hat\varepsilon}\\
    &= - \alpha \underline{\hat\varepsilon},
\end{align*}
because the function $\pi(\theta)\frac{1}{\pi_j(\theta_j)}1(\theta_j = \theta_j')1(i=j)$ lies in $W$ and the function $\hat \varepsilon_i(\theta)$ is orthogonal to $W$. Since $-\alpha \underline{\hat\varepsilon}$ does not depend on $\theta_j'$, $\hat x$ is IC for agent $j$. It remains to check IC for agent $n$. Let $\theta_n'$ be a report. Then 
\begin{align*}
    \sum_{\theta_{-n}} \pi_{-n}(\theta_{-n}) \hat x_n(\theta_n',\theta_{-n}) 
    &= 
    \sum_{\theta_{-n}} \pi_{-n}(\theta_{-n}) (1-\sum_{i<n}\hat x_i(\theta_n',\theta_{-n}))\\
    &=1+(n-1)\alpha\underline{\hat\varepsilon}
    -
    \alpha\sum_{i<n}\sum_{\theta} \pi(\theta)\frac{1}{\pi_n(\theta_n)}1(\theta_n=\theta_n') \hat \varepsilon_i(\theta)\\
    &= 1+(n-1)\alpha\underline{\hat\varepsilon},
\end{align*}
because the function $\pi(\theta)\frac{1}{\pi_n(\theta_n)}1(\theta_n=\theta_n')$ lies in $W$ and the function $\hat \varepsilon_i(\theta)$ is orthogonal to $W$. Hence $\hat x$ is an IC mechanism. It only remains to show that the principal's expected payoff from $\hat x$ is greater than $\hat v$.

The principal's expected payoff from $\hat x$ is 
\begin{align*}
    \sum_i\sum_\theta \pi(\theta)v_i(\theta)x_i(\theta)
    = 
    &\sum_{i<n}\sum_\theta \pi(\theta)(v_i(\theta)-v_n(\theta))\hat x_i(\theta)
    + \sum_\theta \pi(\theta)v_n(\theta)\sum_i\hat x_i(\theta)\\
    &= \sum_{i<n}\sum_\theta \tilde v_i(\theta) \hat x_i(\theta) + \bar v\\
    &= \alpha\sum_{i<n}\sum_\theta (\hat w_i(\theta)+\hat\varepsilon_i(\theta)) \hat \varepsilon_i(\theta) 
    -
    \alpha\sum_i\sum_\theta \tilde v_i(\theta) \underline{\hat\varepsilon}
    + \bar v.
\end{align*}
By assumption, $\sum_\theta\pi(\theta)v_i(\theta)$ is the same for all $i$ and hence for any $i<n$: $\sum_\theta \tilde v_i(\theta)=0$. This means that the second term in the last line above is zero. Because in addition $\hat w \in W$ and $\hat\varepsilon$ is orthogonal to $W$, the principal's expected payoff now simplifies to 
\begin{equation*}
    \alpha\sum_{i<n}\sum_\theta \hat \varepsilon_i(\theta)^2 
    + \bar v.
\end{equation*}
By assumption $\tilde v$ does not lie in $W$ and so the projection residual $\hat\varepsilon$ is nonzero. It follows that the first term above is positive and therefore that the principal's expected payoff from $\hat x$ is greater than $\bar v$. That is to say, $\hat x$ is a profitable mechanism.
\end{proof}

\subsection{Proof of Proposition \ref{prop:n_indep_freedisp}}
\begin{proof}
The result follows from Proposition \ref{prop:n_indep_nodisp} by interpreting the disposal option as an additional agent. Formally, let there be an agent $0$ with a singleton type space $\Theta_0=\{\theta^0\}$ and $v_0\equiv0$. A mechanism without disposal in this setting corresponds to a mechanism with disposal in the original setting. By Proposition \ref{prop:n_indep_nodisp}, a profitable mechanism without disposal exists in the setting with the additional agent if and only if there do not exist functions $u_i(\theta_i)$ ($i=0,\dots,n$) such that $v_i(\theta)-v_0(\theta)=u_i(\theta_i)-u_0(\theta_0)$ $\forall i>0\,\forall \theta$. Since $\Theta_0$ is a singleton and $v_0\equiv0$ the condition simplifies the following: A profitable mechanism exists if and only there do not exist functions $u_1(\theta_1), \dots, u_n(\theta_n)$ and a constant $c$ such that $v_i(\theta) = u_i(\theta_i)-c$ $\forall i>0\,\forall \theta$. But the latter simply means that there does not exist an agent $i>0$ such that $v_i(\theta_i, \theta_{-i})$  is not constant in $\theta_{-i}$.
\end{proof}

\bibliography{mechanisms}
\end{document}